\begin{document}

\title{Facility Location Games for Multi-Location Agents with Satisfaction}

\author{Huanjun Wang\inst{1} \and
Qizhi Fang\inst{1} \and
Wenjing Liu\textsuperscript{\Letter}\inst{1,2}}
\authorrunning{H. Wang et al.}
%
\institute{School of Mathematical Sciences, Ocean University of China, Qingdao, 266100, China \\
\email{hjwang@stu.ouc.edu.cn, \{qfang, liuwj\}@ouc.edu.cn}\\
\and
Laboratory of Marine Mathematics, Ocean University of China, Qingdao, 266100, China\\
\email{liuwj@ouc.edu.cn}\\}

\maketitle              

\begin{abstract}
     In this paper, we study mechanism design for single-facility location games where each agent has multiple private locations in $[0,1]$. The individual objective is a satisfaction function that measures the discrepancy between the optimal facility location for an agent and the location provided by the mechanism. Based on  different distance functions from  agents to the facility, we consider two types of individual objectives: the sum-variant satisfaction and the max-variant satisfaction. Our goal is to design mechanisms that locate one facility to maximize  the sum (or the minimum) of all agents’ satisfactions, while  incentivizing agents to truthfully report their locations.    In this paper, we mainly focus on desirable and obnoxious facility location games.   For desirable facility location games, we propose two group strategy-proof mechanisms with approximation ratios of $2$ and $\frac{5}{4}$ for maximizing the sum of the sum-variant and  max-variant satisfaction, respectively. Moreover, another mechanism achieves an approximation ratio of 2 for simultaneously maximizing the minimum of the sum-variant satisfaction  and the minimum of the  max-variant satisfaction.   For  obnoxious facility location games, we establish that two group strategy-proof mechanisms are the best possible,  providing an approximation ratio of $2$ for maximizing the sum of the sum-variant satisfaction and the sum of the  max-variant satisfaction, respectively. Additionally, we devise two $\frac{4}{3}$-approximation randomized group strategy-proof mechanisms, and provide two lower bounds of 1.0625 and 1.0448 of randomized strategy-proof mechanisms for maximizing the sum of the sum-variant satisfaction and the sum of the max-variant satisfaction, respectively.
    
    \keywords{Facility location games \and Mechanism design\and Satisfaction \and Strategy-proof \and Approximation ratio.}
\end{abstract}



\section{Introduction}
The facility location problem, as one of the fundamental optimization problems, mainly focuses on determining the optimal location for the facility to optimize some social objective under certain constraints. This problem usually involves the location selection of public service facilities, logistics centers, factories or commercial facilities, etc. In many real-world settings, agents' locations may be their private information. The social planner must determine the facility location based on the locations reported by agents. Agents may misreport their location information to optimize their individual objectives. Therefore, the goal of the social planner is to establish a mechanism to determine the location of the facility while approximately optimizing some social objective and enabling agents to report truthfully. 

In this paper, we study facility location games for multi-location agents with satisfaction. In most previous studies on facility location games, the cost or utility of each agent is defined by the distance to the facility location. However, agents at distinct spatial positions exhibit divergent levels of satisfaction with respect to identical facility distances, owing to variations in their achievable maximum utilities (or minimal costs). This discrepancy stems from perceptional differences influenced by spatial context. For instance, urban core residents and suburban inhabitants perceive the same physical distance differently. A given distance from the city center to the suburbs is often regarded as considerable by urban dwellers, as most of their daily necessities are accessible within a compact central area. In contrast, suburban residents, who are accustomed to longer travel distances for routine activities such as commuting and shopping, perceive the same distance as relatively acceptable.

\vspace{-0.3cm}

\subsection{Related Work} 
Mechanism design for facility location games has been extensively studied over the past decades.
Procaccia and Tennenholtz \cite{procaccia2009approximate} first studied approximate mechanism design without money for facility location games. They considered strategy-proof mechanisms on the line for $1$-facility and $2$-facility for minimizing the social cost and the maximum cost, and proposed a natural extension where each agent controls multiple locations (referred to as multi-location agents). The agent's cost  is either the total distance or the maximum distance from her locations to the facility. 

Followed from  \cite{procaccia2009approximate}, mechanism design has been widely discussed in different facility location game models. We first review  studies of single-location agents (i.e., agents controlling exactly one location), then focus on two directions closely related to this work: multi-location agents and satisfaction.

\subsubsection{Facility Location Games with Single-Location Agents.}  Lu et al.  \cite{lu2010asymptotically} extended the line setting in \cite{procaccia2009approximate} to the circle for two-facility location problem under minimizing the social cost.  Alon et al.  \cite{alon2010strategyproof} studied the  facility location problem on a circle and general metrics for minimizing the maximum cost. The aforementioned studies all concern desirable  facility location games, that is, each agent prefers to be close to facilities. Ibara and Nagamochi \cite{ibara2012characterizing} characterized  strategy-proof mechanisms for  obnoxious facility location games. Cheng et al. \cite{cheng2013strategy} provided majority mechanisms for obnoxious facility location games  on the line, tree, circle, and general metric space.  Additionally, many other  facility location game models have been proposed under different assumptions. Feigenbaum and Sethuraman  \cite{feigenbaum2014strategyproof} discussed a preference model  where some agents want to be close to the facility while others  want to stay away. Zou and Li \cite{zou2015facility} studied two-opposite-facility location games with limited distance. Fong et al.  \cite{fong2018facility} proposed a facility location model where each agent has fractional preference to facilities: the agents' usage frequencies of two facilities are characterized by two fractions that sum to one. In the above research results, the facility can be built in continuous space. But in some cases, facilities can only be built at some limited locations. Feldman et al. \cite{feldman2016voting} first studied the single-facility location problem  in the context of voting for minimizing the social cost. Tang et al. \cite{tang2020mechanism} further considered  two-facility location games for minimizing the maximum cost.

\subsubsection{Facility Location Games with Multi-Location Agents.}  Thang \cite{thang2010group} extended the desirable facility location problem in \cite{procaccia2009approximate} to the metric space. Mei et al.  \cite{mei2018approximation} designed mechanisms for obnoxious facility location games in which each agent has multiple locations on a line, examining two individual objective functions: maximizing the sum of squared distances and maximizing the sum of distances. Feigenbaum and Sethuraman  \cite{feigenbaum2014strategyproof} investigated one-dimensional hybrid facility location games, in which some agents want to be close to facility and  others want to be far from the facility.

\subsubsection{Facility Location Games with Satisfaction.} Mei et al. \cite{mei2019facility} first considered  desirable facility  and obnoxious facility location games with happiness factor (commonly referred to as satisfaction in subsequent research) which measures the discrepancy between the optimal facility location
for an agent and the location provided by the mechanism.  They devised a $\frac{5}{4}$-approximation group strategy-proof mechanism for desirable facility location games, a $2$-approximation deterministic strategy-proof mechanism and a $\frac{4}{3}$-approximation randomized strategy-proof mechanism for obnoxious facility location games on a line for maximizing the social happiness factor. Walsh \cite{walsh2021strategy} extended the median mechanism to a facility location games at limited locations in which facilities can only be placed in finite intervals. Anastasiadis and Deligkas  \cite{anastasiadis2018heterogeneous} discussed heterogeneous facility location games with satisfaction where each agent wants to be  close to, stay away from, or be indifferent to facilities. The satisfaction in these models above is derived from the idea presented in \cite{mei2019facility}. Recently, Li et al. \cite{li2025strategy} analyzed facility location games with another way of quantifying the agent's satisfaction. According to the psychological theory of relative effects, each agent's satisfaction depends  not only on her distance to the facility but also on the probability that its location exceeds a "fair baseline".

\vspace{-0.3cm}
\subsection{Our Results}  
In this paper, we study mechanism design  for facility location games where each agent has multiple private locations in $[0,1]$. Building upon the satisfaction defined for single-location agents in \cite{mei2019facility} and the distance functions for multi-location agents in \cite{procaccia2009approximate} (namely, the sum of distances and the maximum distance between agents' locations and the facility), we introduce two corresponding types of  satisfaction  for multi-location agents: sum-variant satisfaction and max-variant satisfaction. Each agent's objective is to maximize her own satisfaction. We consider two social objectives: maximizing the social satisfaction (i.e., the sum of all agents' satisfactions), and maximizing the minimum satisfaction (i.e., the minimum  of all agents' satisfactions). 

In Section 3, we study desirable facility location games where each agent wants to be close to the facility. In Section 4, we study obnoxious facility location games  where each agent wants to be far away from the facility. The results  are summarized in the  Table 1 and Table 2,  where UB and LB indicate the upper bound and the lower bound on the approximation ratio for any strategy-proof mechanism, respectively.

\begin{table}
\caption{A summary of the results of  desirable facility location games.}\label{tab2}
\centering
\begin{tabularx}{\textwidth}{|X|X|X|}
\hline
  \textbf{Individual objectives} &\textbf{Social objectives} &\textbf{Deterministic}\\
\hline
 Sum-variant satisfaction  & Social satisfaction &   UB: 2 (Theorem 1) \\
     &  & LB: 1.086 \cite{mei2019facility}\\
    &  Minimum satisfaction  & UB: 2 (Theorem 2) \\
 & &  LB: $\frac{4}{3}$ \cite{mei2019facility}\\
  Max-variant satisfaction & Social satisfaction  & UB: $\frac{5}{4}$ (Theorem 3)  \\
 & & LB: 1.086 \cite{mei2019facility}\\  

 &  Minimum satisfaction  & UB: 2 (Theorem 4) \\
 &  & LB: $\frac{4}{3}$ \cite{mei2019facility}\\
\hline
\end{tabularx}
\end{table}

\vspace{-0.8cm}

\begin{table}
\caption{A summary of the  results of obnoxious facility location games.}
\centering

\begin{tabularx}{\textwidth}{|l|X|l|l|}
\hline%
 \textbf{Individual } &\textbf{Social objectives} & \textbf{Deterministic} &\textbf{Randomized} \\
  \textbf{ objectives} & &  & \\
\hline

 Sum-variant  & Social satisfaction & UB: 2 (Theorem 5) & UB: $\frac{4}{3}$  (Theorem 6) \\
    satisfaction &   & LB: 2 \cite{mei2019facility}  & LB: 1.0625 (Theorem 7)  \\
    Max-variant  & Social satisfaction & UB: 2 (Theorem 8) & UB:  $\frac{4}{3}$ (Theorem 9)\\
    satisfaction   & & LB: 2 \cite{mei2019facility} & LB: $1.0448$ (Theorem 10)\\
\hline
\end{tabularx}

\end{table}

Compared to the  desirable facility location games for single-location agents \cite{mei2019facility} (which achieves an approximation ratio of $\frac{3}{2}$), the median mechanism can only achieve an approximation ratio of 2 in the multi-location agents setting. For obnoxious facility games, the approximation ratios of deterministic and randomized mechanisms remain unchanged regardless of the number of locations controlled by the agents.

\section{Preliminaries}

Let $N=\{ 1,2,\dots,n\}$ be a set of agents located on the interval $I=[0,1]$. Each agent $i\in N$ has a private location profile $\textbf{x}_i=( x_{i1},\dots,x_{i\omega_i})$, where $\omega_i$ is the number of locations controlled by agent $i$. Without loss of generality, assume that $ x_{i1}\leq \dots\leq x_{i\omega_i}$ and $\omega=\omega_1+\cdots+\omega_n$. Let $\textbf{x}=(\textbf{x}_1,\dots,\textbf{x}_n )$ be the collected location profile. The social planner determines the location of a facility based on the locations reported by the agents. 

\subsubsection{Individual Objectives.} For the distance function of multi-location agents,  two types are provided  in \cite{procaccia2009approximate}: (1) $d_1(z,\textbf{x}_i)=\sum_{j=1}^{\omega_i}{|z-x_{ij}|}$, for $i\in N$, where agents are concerned about the total distance from the facility location to their locations; (2) $d_2(z,\textbf{x}_i)= \max_{1\leq j\leq \omega_i} {|z-x_{ij}|}$, for $i \in N$, where agents are concerned about the maximum distance from the facility location to their locations. Correspondingly, we propose two types of individual objective functions: the sum-variant satisfaction (with $d_1$) and the max-variant satisfaction (with $d_2$). 

In this paper, we mainly focus on desirable and obnoxious facility location games. For desirable facility location games, each agent wants to stay as close to the facility as possible. The agent will be completely satisfied if the facility is located at a location with  the minimum distance to her locations, and she will be least  satisfied if  the facility is located at a location with  the maximum distance to her locations. 

The satisfaction  $s(y,\textbf{x}_i)$ of agent $i$ is defined below, $$s(y,\textbf{x}_i)=1-\frac{d(y,\textbf{x}_i)-\delta(\textbf{x}_i)}{\Delta(\textbf{x}_i)-\delta(\textbf{x}_i)},$$
where $\Delta(\textbf{x}_i)=\max_{z\in I}d(z,\textbf{x}_i)$ and  $\delta(\textbf{x}_i)=\min_{z\in I}d(z,\textbf{x}_i)$ are the maximum and minimum possible distance from agent $i$ to the facility, respectively, and $d(z,\textbf{x}_i)$ denotes a distance function.

Analogously, for  obnoxious facility location games where each agent wants to stay as far from the facility as possible, the satisfaction of agent $i$ is given as follows,  $$s(y,\textbf{x}_i)=\frac{d(y,\textbf{x}_i)-\delta(\textbf{x}_i)}{\Delta(\textbf{x}_i)-\delta(\textbf{x}_i)}.$$

For each agent, the maximum possible distance will be obtained from the two endpoints 0 or 1; the minimum possible distance is obtained from her median location and the midpoint of her locations for $d_1$ and $d_2$, respectively.

Each agent wants to maximize her satisfaction. Given a facility location $y$, an  agent $i\in N$ and her location profile $\textbf{x}_i$, let $c_i=\frac{x_{i1}+x_{i\omega_i}}{2}$ be the midpoint of her locations. A crucial observation for the distance function $d_2$ is that  $$d_2(y,\textbf{x}_i)=\max_{1\leq j\leq \omega_i}|y-x_{ij}|=|y-c_i|+\frac{x_{i\omega_i}-x_{i1}}{2}.$$

\begin{proposition}
    For the  max-variant satisfaction, the satisfaction of each agent  is only determined by  the midpoint of her locations.
\end{proposition}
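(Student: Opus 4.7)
The plan is to substitute the identity $d_2(y,\textbf{x}_i)=|y-c_i|+\frac{x_{i\omega_i}-x_{i1}}{2}$, already established in the paragraph above the statement, into the definition of the satisfaction function and observe that the spread term $\frac{x_{i\omega_i}-x_{i1}}{2}$ cancels out of both the numerator and the denominator.

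First I would compute $\delta(\textbf{x}_i)=\min_{z\in I}d_2(z,\textbf{x}_i)$ and $\Delta(\textbf{x}_i)=\max_{z\in I}d_2(z,\textbf{x}_i)$ using the identity: since $c_i\in[0,1]$, the minimum of $|z-c_i|$ over $I$ is $0$ (attained at $z=c_i$), giving $\delta(\textbf{x}_i)=\frac{x_{i\omega_i}-x_{i1}}{2}$, while the maximum of $|z-c_i|$ over $I$ is $\max\{c_i,1-c_i\}$ (attained at an endpoint), giving $\Delta(\textbf{x}_i)=\max\{c_i,1-c_i\}+\frac{x_{i\omega_i}-x_{i1}}{2}$. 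Consequently,
\[
d_2(y,\textbf{x}_i)-\delta(\textbf{x}_i)=|y-c_i|, \qquad \Delta(\textbf{x}_i)-\delta(\textbf{x}_i)=\max\{c_i,1-c_i\},
\]
both of which depend on $\textbf{x}_i$ solely through $c_i$.

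Plugging these into the definition of $s(y,\textbf{x}_i)$ would yield, in the desirable setting,
\[
s(y,\textbf{x}_i)=1-\frac{|y-c_i|}{\max\{c_i,1-c_i\}},
\]
and in the obnoxious setting the analogous expression without the leading $1-$. In either case the expression depends only on $y$ and $c_i$, so the satisfaction is determined entirely by the midpoint of the agent's locations.

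The argument is essentially a direct calculation, so there is no genuine obstacle; the only subtlety is verifying the boundary cases for the maximum of $|z-c_i|$ on $[0,1]$, which is immediate from $c_i\in[0,1]$. I would also note that one must treat degenerate profiles where $x_{i1}=x_{i\omega_i}$ and $c_i\in\{0,1\}$ so that $\Delta(\textbf{x}_i)=\delta(\textbf{x}_i)$ might happen (when all locations coincide with an endpoint), in which case the satisfaction is conventionally taken to be $1$ (or $0$) and the claim still holds vacuously since $c_i$ alone determines the value.
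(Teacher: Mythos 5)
Your proposal is correct and follows essentially the same route as the paper's proof: substituting $d_2(y,\textbf{x}_i)=|y-c_i|+\frac{x_{i\omega_i}-x_{i1}}{2}$ into the satisfaction ratio and cancelling the spread term to obtain $\frac{|y-c_i|}{\max\{c_i,1-c_i\}}$. Your additional remark about the degenerate case $\Delta(\textbf{x}_i)=\delta(\textbf{x}_i)$ is a reasonable extra precaution that the paper omits.
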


Due to space constraints, the proof of Proposition 1 can be found in the appendix.\footnote{Due to space constraints, all missing proofs can be found in the appendix.}

\subsubsection{Social Objectives.} We  focus on optimizing the following two social objective functions: (1) maximizing the social satisfaction, which is defined as the  sum of all the agents' satisfactions, that is, $SS(y,\textbf{x}) =\sum_{i\in N}s(y, \textbf{x}_i)$; (2) maximizing the minimum satisfaction, which is defined as the minimum of all the agents' satisfactions, that is, $MS(y,\textbf{x}) =\min_{i\in N}s(y, \textbf{x}_i)$. 

A deterministic mechanism is a function $f : I^\omega \to I$ that maps a given location profile $\textbf{x}\in I^\omega$ to a facility location $f(\textbf{x})\in I$. A randomized mechanism is a function $f$ that maps a given location profile $\textbf{x}\in I^\omega$  to a probability distribution  over $I$. The satisfaction  of agent $i$ for randomized mechanism $f$ is defined as  $s(f(\textbf{x}), \textbf{x}_i) = \mathbb{E}_{y\sim f(\textbf{x})}[s(y,\textbf{x}_i)]$. Moreover, we have that $SS(f(\textbf{x}),\textbf{x})=\mathbb{E}_{y\sim f(\textbf{x})}[SS(y,\textbf{x})] $ and $MS(f(\textbf{x}),\textbf{x})=\min_i\mathbb{E}_{y\sim f(\textbf{x})}[s(y,\textbf{x})] $.

A mechanism $f$ is strategy-proof, i.e., no agent can improve  her satisfaction by misreporting her locations, regardless of the others' strategies. Formally, for any $i\in N$, any location profile $\textbf{x}$ and any $\textbf{x}_i' \in I^{\omega_i}$, it holds  that $$s(f(\textbf{x}), \textbf{x}_i) \geq s(f(\textbf{x}'_i, \textbf{x}_{-i}), \textbf{x}_i),$$ where $\textbf{x}_{-i}= (\textbf{x}_1,\dots, \textbf{x}_{i-1}, \textbf{x}_{i+1},\dots, \textbf{x}_n)$ is the location profile without agent $i$. Moreover, a mechanism $f$ is group strategy-proof if no coalition of agents can improve their satisfactions by misreporting their locations simultaneously. Formally, for any non-empty subset $G \subseteq N$ of agents, any location profile $\textbf{x}$,  and  any $\textbf{x}'
_G$, there exists $i \in G$ such that  $$s(f(\textbf{x}), \textbf{x}_i) \geq s(f(\textbf{x}'_G, \textbf{x}_{-G}), \textbf{x}_i),$$  where $\textbf{x}_{-G}$ is the location profile without agents in $G$.

 A mechanism $f$ is $\alpha$-approximation if it satisfies
\begin{equation*}
    \alpha=\sup_{\textbf{x}}\frac{S(opt(\textbf{x}),\textbf{x})}{S(f(\textbf{x}), \textbf{x})},
\end{equation*}
where $S$ is either $SS$ or $MS$, and $opt(\textbf{x})$ is the corresponding optimal facility location. 

In this paper, we are interested in deterministic and randomized (group) strategy-proof mechanisms with small approximation ratios. Firstly, we study desirable facility location games where each agent wants  to be close to the facility. Then, we discuss obnoxious facility location games where each agent wants  to be far away from the facility.

\section{Desirable Facility Location Games}
In this section, we focus on deterministic (group) strategy-proof mechanisms for  desirable facility location games. Next, we will discuss the sum-variant  and  max-variant satisfaction, respectively.

\subsection{Sum-Variant Satisfaction}

In this subsection, we discuss the case where the individual objective is  the sum-variant satisfaction. We  propose a deterministic group strategy-proof mechanism  for maximizing the social satisfaction and  another  for maximizing the minimum satisfaction. 

Recall that  $d_1(y,\textbf{x}_i)=\sum_{j=1}^{\omega_i}|y-x_{ij}|$ for $i\in N$ in the sum-variant satisfaction. For each agent, the optimal facility location is  her median location. Inspired by this,  we propose a mechanism which outputs the median of all agents' median locations. 

Let $med(\textbf{x}_i)$ denote the median location of  locations controlled by agent $i$. If there are two median positions, we select the left median, that is, $med(\textbf{x}_i)=x_{i\lceil\frac{\omega_i}{2}\rceil}$. Denote the median location profile as $\textbf{x}'=( med(\textbf{x}_1), \dots,med(\textbf{x}_n))$. Without loss of generality, assume that $med(\textbf{x}_1)\leq \dots\leq med(\textbf{x}_n)$.

\begin{mechanism}\label{me1}
    Given a location profile $\textbf{x}$,  locate the facility at $med(\textbf{x}')$.
\end{mechanism}

\begin{theorem}
     Mechanism $\ref{me1}$ is  a group strategy-proof mechanism with approximation ratio of $2$ for maximizing the social satisfaction.
\end{theorem}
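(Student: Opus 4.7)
The plan is to prove the two assertions by exploiting the fact that $d_1(\cdot,\textbf{x}_i)=\sum_j|\cdot-x_{ij}|$ is piecewise linear and convex with minimum $\delta(\textbf{x}_i)$ attained at $m_i:=med(\textbf{x}_i)$, so the satisfaction $s(\cdot,\textbf{x}_i)$ is concave and single-peaked with peak $m_i$.

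\textbf{Group strategy-proofness.} Mechanism~1 depends on each agent's report only through the reported median $med(\textbf{x}_i')$, and any value in $[0,1]$ can be induced as that reported median by a suitable choice of $\textbf{x}_i'$. The mechanism therefore reduces to the classical median voting rule on single-peaked preferences. To verify group strategy-proofness directly, suppose a coalition $G$ makes every member strictly better off, shifting the output from $y$ to some $y'>y$ (the case $y'<y$ is symmetric). By single-peakedness, each $i\in G$ with $m_i\leq y$ would be weakly hurt by moving rightward, so strict improvement forces $m_i>y$ for all $i\in G$. But then every agent with $m_j\leq y$ lies outside $G$, keeps her report, and satisfies $m_j'=m_j\leq y<y'$; the left-median condition in the truthful profile gives $|\{j:m_j'\leq y\}|\geq\lceil n/2\rceil$, which forces the new median $y'\leq y$---a contradiction.

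\textbf{Approximation ratio.} Let $y=med(\textbf{x}')$ and $y^*$ be an optimal facility location. Since $s\leq 1$ pointwise, $SS(y^*,\textbf{x})\leq n$, so it suffices to prove $SS(y,\textbf{x})\geq n/2$. Partition $N$ into $L^-=\{i:m_i<y\}$, $L_0=\{i:m_i=y\}$, and $R=\{i:m_i>y\}$. Agents in $L_0$ contribute $s(y,\textbf{x}_i)=1$. For $i\in L^-$, convexity of $d_1(\cdot,\textbf{x}_i)$ on $[m_i,1]$ yields the chord bound $d_1(y,\textbf{x}_i)-\delta_i\leq \tfrac{y-m_i}{1-m_i}\bigl(d_1(1,\textbf{x}_i)-\delta_i\bigr)\leq \tfrac{y-m_i}{1-m_i}(\Delta_i-\delta_i)$, whence $s(y,\textbf{x}_i)\geq \tfrac{1-y}{1-m_i}\geq 1-y$; symmetrically, $s(y,\textbf{x}_i)\geq y$ for $i\in R$. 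Summing,
\begin{equation*}
SS(y,\textbf{x})\;\geq\;|L^-|(1-y)+|L_0|+|R|\,y.
\end{equation*}
This expression is linear in $y\in[0,1]$, so its minimum is attained at an endpoint and equals $|L^-|+|L_0|$ or $|L_0|+|R|$; both counts are at least $\lceil n/2\rceil\geq n/2$ by the defining property of the left median, so $SS(y,\textbf{x})\geq n/2$.

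\textbf{Main obstacle.} The key step is pairing the convexity-based per-agent bounds $s(y,\textbf{x}_i)\geq 1-y$ (for $i\in L^-$) and $s(y,\textbf{x}_i)\geq y$ (for $i\in R$)---each vacuous at one endpoint of $[0,1]$ on its own---with the two median-threshold counts, so that the aggregated linear lower bound stays above $n/2$ at both endpoints of $y$. The slack between the chord drop $d_1(1,\textbf{x}_i)-\delta_i$ and the normalization $\Delta_i-\delta_i$ is precisely what causes the ratio to degrade from $3/2$ (single-location setting) to $2$ here; degenerate agents with $\Delta_i=\delta_i$ (e.g., locations symmetric around $1/2$) take $s\equiv 1$ by convention and sit harmlessly in $L_0$ whenever $y=m_i$.
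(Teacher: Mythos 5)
Your proof is correct, but the approximation-ratio half takes a genuinely different route from the paper. The paper pairs agents symmetrically by median rank, writes $SS(y,\textbf{x})$ as a sum over the pairs $\textbf{p}_i=(\textbf{x}_i,\textbf{x}_{n-i+1})$, and argues per pair: the optimum collects at most $2$ from each pair, while the mechanism's output lies in $[med(\textbf{x}_i),med(\textbf{x}_{n-i+1})]$, where concavity of $SS(\cdot,\textbf{p}_i)$ together with the value $\geq 1$ at each endpoint (one agent of the pair is fully satisfied there) forces $SS(f(\textbf{x}),\textbf{p}_i)\geq 1$. You instead prove the absolute bound $SS(f(\textbf{x}),\textbf{x})\geq n/2$ via a per-agent chord estimate ($s(y,\textbf{x}_i)\geq 1-y$ left of the facility, $\geq y$ right of it) combined with the two median-count inequalities, then compare against $SS(opt(\textbf{x}),\textbf{x})\leq n$. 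Both are sound; the paper's pairing is tighter in spirit (it compares against the pairwise optima rather than against $n$), whereas your counting argument is more elementary and makes the source of the factor $2$ transparent. Your group-strategy-proofness argument is the same standard median-voting argument as the paper's (and is in fact stated more carefully). Two small remarks: the theorem asserts the ratio \emph{equals} $2$, so to be complete you should add a tight instance (the paper uses $\textbf{x}_1=(0,\tfrac12)$, $\textbf{x}_2=(\tfrac12,1)$, where the mechanism places the facility at $0$ and gets $SS=1$ versus the optimum $2$); and your handling of the degenerate agents with $\Delta(\textbf{x}_i)=\delta(\textbf{x}_i)$ (e.g.\ $\textbf{x}_i=(0,1)$) by the convention $s\equiv 1$ is a reasonable patch for an edge case the paper silently ignores, though you should note that such an agent satisfies the bounds $s\geq 1-y$ and $s\geq y$ trivially wherever her median lies, not only when she lands in $L_0$.
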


Mei et al. \cite{mei2019facility} have provided a lower bound of 1.086 for single-location agents under the social satisfaction objective. It implies that any deterministic strategy-proof mechanism for multi-location agents has an approximation ratio of at least 1.086 for maximizing the social satisfaction. 

For maximizing the minimum  satisfaction, Mechanism \ref{me1} has unbound approximation ratio. In fact, consider a location profile $\textbf{x}$ with three agents, where $\textbf{x}_1=(0,\frac{1}{2})$, $\textbf{x}_2=(0,1)$, and $\textbf{x}_3=(\frac{1}{2},1)$. Mechanism 1 locates the facility at 0, which gives the minimum satisfaction of 0. However, it holds that $opt(\textbf{x})=\frac{1}{2}$ and $MS(opt(\textbf{x}),\textbf{x})=1$. Then,  an unbounded approximation ratio is derived. Therefore, we propose another mechanism for maximizing the minimum  satisfaction.  

\begin{mechanism} \label{me2}
    Given a location profile $\textbf{x}$, locate the facility at $\frac{1}{2}$.
\end{mechanism}

\begin{theorem} \label{th2}
     Mechanism $\ref{me2}$ is a group strategy-proof mechanism with  approximation ratio of $2$ for maximizing the minimum   satisfaction. 
\end{theorem}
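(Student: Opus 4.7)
The plan is to exploit that Mechanism~\ref{me2} is a constant mechanism and that each agent's satisfaction is a concave function on $[0,1]$. Since the output is always $\frac{1}{2}$ independent of the reported profile, no coalition of agents can alter the outcome by misreporting, which immediately yields group strategy-proofness. Hence the substance of the theorem is the approximation bound, and since $MS(opt(\textbf{x}),\textbf{x})\leq 1$ always, it suffices to prove the per-agent inequality $s(\tfrac{1}{2},\textbf{x}_i)\geq\tfrac{1}{2}$ for every $i\in N$. This gives $MS(\tfrac{1}{2},\textbf{x})\geq\tfrac{1}{2}$ and hence ratio at most $2$.

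To establish the per-agent inequality I would argue as follows. Because $d_1(\cdot,\textbf{x}_i)=\sum_j|\cdot-x_{ij}|$ is convex on $[0,1]$, the satisfaction $s(\cdot,\textbf{x}_i)=\frac{\Delta(\textbf{x}_i)-d_1(\cdot,\textbf{x}_i)}{\Delta(\textbf{x}_i)-\delta(\textbf{x}_i)}$ is a positive-affine transformation of a concave function, hence concave on $[0,1]$ (the degenerate case $\Delta(\textbf{x}_i)=\delta(\textbf{x}_i)$ is handled by the convention $s\equiv 1$). Let $m=med(\textbf{x}_i)$; by the reflection symmetry $y\mapsto 1-y$ we may assume $m\leq\tfrac{1}{2}$, so $\tfrac{1}{2}\in[m,1]$. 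Note $s(m,\textbf{x}_i)=1$ by definition of $\delta(\textbf{x}_i)$, and $s(1,\textbf{x}_i)\geq 0$ simply because $d_1(1,\textbf{x}_i)\leq\Delta(\textbf{x}_i)$. Writing $\tfrac{1}{2}=\lambda\,m+(1-\lambda)\cdot 1$ with $\lambda=\frac{1/2}{1-m}\in[\tfrac{1}{2},1]$, concavity of $s(\cdot,\textbf{x}_i)$ on $[m,1]$ yields
\[
s\!\left(\tfrac{1}{2},\textbf{x}_i\right)\;\geq\;\lambda\,s(m,\textbf{x}_i)+(1-\lambda)\,s(1,\textbf{x}_i)\;\geq\;\lambda\;=\;\frac{1/2}{1-m}\;\geq\;\tfrac{1}{2},
\]
using $m\geq 0$. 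The symmetric argument (using $0$ in place of $1$) handles $m\geq\tfrac{1}{2}$.

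Tightness of the ratio is witnessed by the single-agent instance $\textbf{x}_1=(0)$: the optimum $y=0$ achieves $MS=1$, while Mechanism~\ref{me2} gives $s(\tfrac{1}{2},(0))=\tfrac{1}{2}$. The only subtle point is the choice to bound via $s(1,\textbf{x}_i)\geq 0$ rather than via a chord from $m$ to the exact $\operatorname{argmax}_y d_1(y,\textbf{x}_i)$: the latter fails when the argmax and $m$ lie on the same side of $\tfrac{1}{2}$, whereas invoking the trivial bound $s(1,\textbf{x}_i)\geq 0$ gives a uniform estimate regardless of which endpoint attains $\Delta(\textbf{x}_i)$. Apart from this small observation, the argument reduces to a one-line concavity computation.
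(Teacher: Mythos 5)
Your proof is correct and follows essentially the same strategy as the paper's: group strategy-proofness is immediate from the constant output, and the ratio reduces to the per-agent bound $s(\frac{1}{2},\mathbf{x}_i)\geq\frac{1}{2}$ combined with $MS(opt(\mathbf{x}),\mathbf{x})\leq 1$, finished off with a tight instance. The only difference is in how that per-agent inequality is derived: the paper bounds $d_1(\frac{1}{2},\mathbf{x}_i)$ by the average of $d_1$ at the median and at its reflection $1-med(\mathbf{x}_i)$, whereas you interpolate $s$ along the chord from the median to the far endpoint --- both are instances of convexity of $d_1$ plus the trivial bound $d_1\leq\Delta(\mathbf{x}_i)$ at one anchor point, and your concave-interpolation phrasing (together with the simpler single-agent tight example) is an equally valid packaging.
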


For maximizing the minimum satisfaction, Mei et al. \cite{mei2019facility} have given a lower bound of $\frac{4}{3}$ on the approximation ratio of deterministic strategy-proof mechanisms.

\subsection{Max-Variant Satisfaction}
In this subsection, we study the case where the individual objective is the max-variant satisfaction. We present a deterministic group strategy-proof mechanism  for maximizing the social satisfaction and another for maximizing the minimum satisfaction. 

Note that the optimal facility location for each agent is the midpoint of her locations under the max-variant satisfaction.  For each agent $i$, let $c_i$ be the midpoint of her location.  Let $\textbf{c}=( c_1, c_2, \cdots, c_n)$ be the midpoint location profile of agents. Without loss of generality, we next assume that $c_1\leq c_2\leq \cdots\leq c_n$.

\begin{mechanism}\label{me4}
     Given a location profile $\textbf{x}$, if $c_{\lceil\frac{n}{2}\rceil}\leq\frac{1}{5}$, locate the facility at $\frac{1}{5}$; if $c_{\lceil\frac{n}{2}\rceil}\geq\frac{4}{5}$, locate the facility at $\frac{4}{5}$; otherwise, locate the facility at $c_{\lceil\frac{n}{2}\rceil}$.
\end{mechanism}

\begin{theorem}
    Mechanism $\ref{me4}$ is a group strategy-proof mechanism with approximation ratio of $\frac{5}{4}$ for maximizing the social satisfaction. 
\end{theorem}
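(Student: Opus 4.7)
The plan has two parts. Both leverage Proposition~1, which reduces the max-variant problem to midpoints: each agent's satisfaction $s(y,\textbf{x}_i)=1-|y-c_i|/\max(c_i,1-c_i)$ depends only on $c_i$, is strictly single-peaked in $y$ with peak at $c_i$, and any agent can realize any $c_i'\in[0,1]$ by reporting both endpoints at $c_i'$. So the game effectively reduces to agents reporting midpoints, with Mechanism~\ref{me4} outputting $y=\max\bigl(\tfrac{1}{5},\min\bigl(\tfrac{4}{5},c_{\lceil n/2\rceil}\bigr)\bigr)$, the median of midpoints projected onto $[\tfrac{1}{5},\tfrac{4}{5}]$.

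For group strategy-proofness, this projected median is a classical generalized median (a median with two phantom voters at $\tfrac{1}{5}$ and $\tfrac{4}{5}$), known to be group strategy-proof under single-peaked preferences. I would include a short direct verification: if any coalition's manipulation shifts the output from $y$ to $y'\neq y$, then by monotonicity of the median some coalition member's true peak $c_i$ lies on the opposite side of $y$ from $y'$, and by single-peakedness her satisfaction strictly decreases; the clipping boundaries are handled analogously.

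For the approximation ratio, let $y$ be the mechanism's output and $y^*$ a social optimum. By the symmetry $c\mapsto 1-c$ (under which every $m_i$ and every $s$ is preserved), it suffices to consider $y\in[\tfrac{1}{5},\tfrac{1}{2}]$. My strategy is to prove the equivalent inequality $\tfrac{5}{4}\,SS(y)\geq SS(y^*)$ by summing a per-agent inequality for $\phi_i:=\tfrac{5}{4}\,s(y,c_i)-s(y^*,c_i)$. In the left-clipped case $y=\tfrac{1}{5}$, let $A=\{i:c_i\leq\tfrac{1}{5}\}$ (so $|A|\geq\lceil n/2\rceil$) and $B=\{i:c_i>\tfrac{1}{5}\}$. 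For $i\in A$, a direct computation using $m_i=1-c_i\geq\tfrac{4}{5}$ gives $\phi_i=(c_i+|y^*-c_i|)/(1-c_i)\geq y^*$. For $i\in B$, a short piecewise case split on $c_i\in(\tfrac{1}{5},\tfrac{1}{2}]$ versus $c_i\in(\tfrac{1}{2},1]$ and on the sign of $y^*-c_i$ shows $-\phi_i\leq y^*$ when $y^*\geq\tfrac{1}{5}$, and $-\phi_i\leq 0$ when $y^*<\tfrac{1}{5}$. Summing, $\sum_i\phi_i\geq y^*(|A|-|B|)\geq 0$. The unclipped case $y=c_{\lceil n/2\rceil}\in[\tfrac{1}{5},\tfrac{1}{2}]$ is handled by a similar balancing argument, now exploiting that $y$ is the $L_1$-median of the midpoints (so $\sum_i|y-c_i|\leq\sum_i|y^*-c_i|$) combined with the uniform bound $m_i\geq\tfrac{1}{2}$.

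The main obstacle is the minority-agent bound in the left-clipped case: a single minority agent at $c_i=1$ has $s(\tfrac{1}{5},c_i)=\tfrac{1}{5}$ but can reach $s(1,c_i)=1$, so the pointwise ratio is as large as $5$; the balancing relies on the precise dependence of $-\phi_i$ on $c_i$ and $y^*$, which a careful case analysis bounds by $y^*$ in every regime. The analysis is tight: ratio exactly $\tfrac{5}{4}$ is attained by all midpoints at $c_i=0$ (where $SS(\tfrac{1}{5})=\tfrac{4}{5}n$ and $SS(0)=n$), and also by $n=2$ with $c_1=0,\,c_2=\tfrac{1}{2},\,y^*=\tfrac{1}{2}$, yielding $SS(y^*)/SS(y)=(\tfrac{3}{2})/(\tfrac{6}{5})=\tfrac{5}{4}$.
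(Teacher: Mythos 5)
Your overall architecture is sound: the reduction via Proposition~1 to single-peaked preferences over midpoints, the generalized-median/phantom-voter argument for group strategy-proofness, and the per-agent balancing $\phi_i=\frac{5}{4}s(y,c_i)-s(y^*,c_i)$ in the clipped case are all correct. I checked your clipped-case inequalities: for $c_i\le\frac{1}{5}$ one indeed gets $\phi_i=\frac{c_i+|y^*-c_i|}{1-c_i}\ge y^*$, and the case split on $c_i\in(\frac{1}{5},\frac{1}{2}]$ versus $c_i\in(\frac{1}{2},1]$ does give $-\phi_i\le y^*$ (resp.\ $-\phi_i\le 0$ when $y^*<\frac{1}{5}$), so the summation $\sum_i\phi_i\ge y^*(|A|-|B|)\ge 0$ closes that case, and your tight examples are correct. (The paper itself omits the proof of this theorem from the appendix; its intended route is almost certainly the one-line reduction, via Proposition~1, to the $\frac{5}{4}$-approximation mechanism of Mei et al.\ for single-location agents, so your self-contained argument is a genuinely different and more informative route.)

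The gap is in the unclipped case. The two tools you name there --- the $L_1$-median property $\sum_i|y-c_i|\le\sum_i|y^*-c_i|$ together with the uniform bound $m_i\ge\frac{1}{2}$ --- are not sufficient. Splitting agents by the sign of $\delta_i=|y-c_i|-|y^*-c_i|$ and using only $m_i\in[\frac{1}{2},1]$ yields $SS(y^*)-SS(y)\le\sum_{\delta_i>0}\delta_i\le\frac{n}{2}\,|y^*-y|$, and combined with $SS(y)\ge\frac{n}{2}(1-y)$ this only bounds the ratio by roughly $2$ (take $y=\frac{1}{2}$, $y^*=1$ to see the slack). What actually saves the unclipped case is the finer structure you already exploited in the clipped case: an agent far to the right of $y$ has $m_i=c_i$ large, so her per-unit weight $\frac{1}{m_i}$ is close to $1$, not $2$; the weight-$2$ agents sit at $c_i=\frac{1}{2}$ and cannot be far from $y^*$. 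Concretely, for $y=c_{\lceil n/2\rceil}\in(\frac{1}{5},\frac{1}{2}]$ and $y^*>y$, every agent with $c_i\le y$ satisfies $\phi_i=\frac{\frac{1}{4}(1-y)+(y^*-y)}{1-c_i}\ge\frac{1-y}{4}+(y^*-y)$, and a case analysis on $c_i\gtrless\frac{1}{2}$ and $c_i\gtrless y^*$ (using $m_i=c_i$ in the first branch) shows $-\phi_i\le\frac{1-y}{4}+(y^*-y)$ for every agent with $c_i>y$; balancing against the $\lceil n/2\rceil$ majority then gives $\sum_i\phi_i\ge 0$, with a symmetric argument for $y^*<y$. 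So the theorem and your strategy survive, but as written the unclipped step would fail; you need to replace the $L_1$-median appeal with the same $\phi_i$ case analysis you did for the clipped case. A minor additional point: the reflection $c\mapsto 1-c$ turns the left median into the right median when $n$ is even, so either argue both orientations directly or note that the only property of the median you use (at least $\lceil n/2\rceil$ midpoints on the clipped side, and at least half on each side in the unclipped case) is preserved.
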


Furthermore, building on Proposition 1 and the lower bound of  1.086 in \cite{mei2019facility} for single-location agents, it implies that any deterministic strategy-proof mechanism has an approximation ratio of at least 1.086 for maximizing the social satisfaction.

\begin{theorem}
    Mechanism $2$ is a group strategy-proof mechanism with approximation ratio of $2$ for maximizing the minimum satisfaction.
\end{theorem}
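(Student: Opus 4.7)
The plan is to exploit the fact that Mechanism 2 is a constant mechanism: it outputs $\frac{1}{2}$ no matter what profile is reported. This immediately gives group strategy-proofness, since no coalition $G \subseteq N$ can alter $f(\textbf{x})$ by jointly misreporting; so every agent's satisfaction is unchanged under any deviation $\textbf{x}'_G$, and the group strategy-proof inequality holds trivially (with equality) for every $i \in G$. The substantive work is therefore limited to the approximation bound.

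For the approximation analysis, I would first invoke Proposition~1 to reduce the max-variant satisfaction to a function of the midpoints. Using $d_2(y,\textbf{x}_i) = |y-c_i| + \tfrac{x_{i\omega_i}-x_{i1}}{2}$ together with $\delta(\textbf{x}_i) = \tfrac{x_{i\omega_i}-x_{i1}}{2}$ and $\Delta(\textbf{x}_i) = \max(c_i, 1-c_i) + \tfrac{x_{i\omega_i}-x_{i1}}{2}$, the satisfaction simplifies to
$$s(y,\textbf{x}_i) \;=\; 1 - \frac{|y - c_i|}{\max(c_i,\, 1-c_i)}.$$
Substituting $y = \tfrac{1}{2}$ and splitting on whether $c_i \le \tfrac{1}{2}$ or $c_i \ge \tfrac{1}{2}$, a direct calculation yields $s(\tfrac{1}{2},\textbf{x}_i) = \tfrac{1/2}{1-c_i}$ in the first case and $s(\tfrac{1}{2},\textbf{x}_i) = \tfrac{1/2}{c_i}$ in the second; in both cases the value is at least $\tfrac{1}{2}$ for any $c_i \in [0,1]$. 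Hence $MS(\tfrac{1}{2},\textbf{x}) \ge \tfrac{1}{2}$, and since $MS(opt(\textbf{x}),\textbf{x}) \le 1$, the ratio is at most $2$.

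To confirm that the ratio $2$ is achieved, I would exhibit the single-agent instance $\textbf{x}_1 = (0,0)$, giving $c_1 = 0$. The mechanism yields $s(\tfrac{1}{2},\textbf{x}_1) = \tfrac{1}{2}$, whereas the optimal placement $opt(\textbf{x}) = 0$ achieves satisfaction $1$, so the ratio is exactly $2$. There is no real obstacle to this proof: the constant-output structure renders the incentive check trivial, and Proposition~1 collapses the satisfaction to a simple one-parameter expression whose worst case at $y = \tfrac{1}{2}$ is reached at the endpoints $c_i \in \{0,1\}$. The only care needed is to treat both orientations $c_i \lessgtr \tfrac{1}{2}$ when bounding $s(\tfrac{1}{2},\textbf{x}_i)$.
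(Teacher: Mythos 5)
Your proof is correct and follows essentially the same route as the paper: show that the constant output $\tfrac{1}{2}$ guarantees every agent a satisfaction of at least $\tfrac{1}{2}$, combine this with $MS(opt(\textbf{x}),\textbf{x})\leq 1$, and exhibit a degenerate instance at $0$ for tightness. The only difference is that you carry out explicitly, via the closed form $s(y,\textbf{x}_i)=1-\frac{|y-c_i|}{\max(c_i,1-c_i)}$ from Proposition~1, the pointwise bound that the paper obtains by appealing to the analogous computation in Theorem~2; this is a welcome clarification rather than a different argument.
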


For the lower bound, we can obtain the same conclusion as the sum-variant satisfaction: any  deterministic strategy-proof mechanism has an approximation ratio of at least $\frac{4}{3}$ for maximizing the minimum satisfaction.

\section{Obnoxious Facility Location Games}
In this section, we discuss  deterministic  and randomized (group) strategy-proof mechanisms for obnoxious facility location games.  Recall that the individual objective of agent $i$ about the obnoxious facility location $y$ is $$s(y,\textbf{x}_i)=\frac{d(y,\textbf{x}_i)-\delta(\textbf{x}_i)}{\Delta(\textbf{x}_i)-\delta(\textbf{x}_i)}.$$

\subsection{Sum-Variant Satisfaction}

In this subsection, we consider the case where the individual objective is the sum-variant satisfaction. We first present a deterministic group strategy-proof mechanism for maximizing the social satisfaction. Additionally, we propose a randomized group strategy-proof mechanism and  provide the lower bound of $\frac{17}{16}$ on the approximation ratio of any randomized strategy-proof mechanism.

Given a location profile $\textbf{x}$, let  $N_1(\textbf{x})=\{i\in N|\sum_{j=1}^{\omega_i}x_{ij} \geq \sum_{j=1}^{\omega_i}(1-x_{ij})\}$ be the set of agents who prefer $0$, $N_2(\textbf{x)}=\{i\in N|\sum_{j=1}^{\omega_i}x_{ij} < \sum_{j=1}^{\omega_i}(1-x_{ij})\}$ be the set of agents who prefer $1$.  

\begin{mechanism}\label{th3}
    Given a location profile $\textbf{x}$, if $|N_1(\textbf{x})|\geq |N_2(\textbf{x})|$, locate the facility at $0$;  otherwise, locate the facility at $1$.
\end{mechanism}

\begin{theorem}
    Mechanism $\ref{th3}$ is a group strategy-proof mechanism with approximation ratio of $2$ for maximizing the social satisfaction. 
\end{theorem}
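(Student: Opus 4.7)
The plan is to decompose the proof into two parts: (i) group strategy-proofness, and (ii) the $2$-approximation guarantee. The approximation argument will exploit the identity $d_1(0,\textbf{x}_i) + d_1(1,\textbf{x}_i) = \omega_i$ together with the observation that every agent is fully satisfied at exactly the endpoint she prefers, namely $s(0,\textbf{x}_i) = 1$ for $i \in N_1(\textbf{x})$ and $s(1,\textbf{x}_i) = 1$ for $i \in N_2(\textbf{x})$, because the farther endpoint realizes $\Delta(\textbf{x}_i)$.

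For group strategy-proofness, I would proceed by cases on how a coalition $G$ could attempt to alter the outcome. Assume without loss of generality that the truthful profile $\textbf{x}$ yields $|N_1(\textbf{x})| \geq |N_2(\textbf{x})|$, so the mechanism places the facility at $0$. If the misreported profile $(\textbf{x}'_G,\textbf{x}_{-G})$ still satisfies $|N_1'| \geq |N_2'|$, the facility stays at $0$ and every agent's true satisfaction is unchanged, and the definition is satisfied trivially. Otherwise, the outcome flips to $1$. The key step is to show that any outcome-flipping coalition must contain some $i \in G \cap N_1(\textbf{x})$: agents outside $G$ retain their original classification, so if $G \subseteq N_2(\textbf{x})$ one obtains $|N_1'| \geq |N_1(\textbf{x})| \geq |N_2(\textbf{x})| \geq |N_2'|$, contradicting the flip. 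Any such $i \in G \cap N_1(\textbf{x})$ has $s(0,\textbf{x}_i) = 1$ under the truthful outcome and $s(1,\textbf{x}_i) \leq 1$ after the manipulation, so the existential condition in the definition of group strategy-proofness holds at this $i$. The symmetric argument handles the case in which the mechanism originally outputs $1$.

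For the approximation ratio, the plan is to show $SS(f(\textbf{x}),\textbf{x}) \geq n/2$ and combine it with the upper bound $SS(opt(\textbf{x}),\textbf{x}) \leq n$. When the mechanism outputs $0$, I will bound $SS(0,\textbf{x}) = \sum_{i \in N_1(\textbf{x})} 1 + \sum_{i \in N_2(\textbf{x})} s(0,\textbf{x}_i) \geq |N_1(\textbf{x})| \geq n/2$, using only that satisfactions are nonnegative and that the "$\geq$" branch was selected. The case in which the mechanism outputs $1$ is analogous. Since $d_1(y,\textbf{x}_i)$ is convex on $[0,1]$, its maximum is attained at $\{0,1\}$ and equals $\Delta(\textbf{x}_i)$, so $s(y,\textbf{x}_i) \leq 1$ for every $y \in [0,1]$; summing over agents gives $SS(opt(\textbf{x}),\textbf{x}) \leq n$, and the ratio is at most $2$.

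I expect the main obstacle to be the "any outcome-flipping coalition must touch $N_1(\textbf{x})$" argument, specifically the tie-breaking subtlety. The rule $|N_1|\geq|N_2|$ places an agent with $\sum_j x_{ij} = \omega_i/2$ in $N_1(\textbf{x})$ even though she achieves satisfaction $1$ at both endpoints; such an indifferent agent, if recruited by a coalition of $N_2$ agents, could in principle be used to push the outcome to $1$ without hurting herself, while the other coalition members strictly improve. The resolution is that this is still consistent with the existence clause of group strategy-proofness, since the indifferent agent's old satisfaction $1$ weakly dominates her new one. Once this case is handled, the rest of the argument reduces to careful bookkeeping of the cardinalities $|N_1'|$ and $|N_2'|$ as members of $G$ switch sides.
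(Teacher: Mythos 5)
Your proposal is correct and follows essentially the same route as the paper: group strategy-proofness via the observation that any outcome-flipping coalition must contain a member of the endpoint-winning side, who already has satisfaction $1$ and hence cannot improve, and the ratio bound via $SS(f(\textbf{x}),\textbf{x})\geq n/2$ together with $SS(opt(\textbf{x}),\textbf{x})\leq n$ (your handling of the tie-breaking agent is in fact slightly more careful than the paper's, which asserts a strict decrease that fails for indifferent agents but is not needed). The only omission is a tight instance, e.g.\ $\textbf{x}_1=(0,1)$, $\textbf{x}_2=(0,\tfrac{1}{2})$, which you should add since the paper defines the approximation ratio as a supremum and claims it equals $2$.
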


Mei et al. \cite{mei2019facility} proved that the approximation ratio of any deterministic strategy-proof mechanism is at least 2 for obnoxious facility location games with single-location agents. Thus, Mechanism \ref{th3} is the best deterministic strategy-proof mechanism. Now, we consider randomized mechanisms. Let $n_1$ and $n_2$ be the number of agents in sets $N_1(\textbf{x})$ and $N_2(\textbf{x})$, respectively.

\begin{mechanism} \label{th4}
    Given a location profile $\textbf{x}$, locate the facility at  $0$ with probability $\frac{n_1}{n}$ and $1$ with probability $\frac{n_2}{n}$. 
\end{mechanism}

\begin{theorem}\label{th7}
	Mechanism $\ref{th4}$ is a group strategy-proof mechanism with approximation ratio of $\frac{4}{3}$ for maximizing the social satisfaction. 
\end{theorem}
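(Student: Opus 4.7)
The plan is to split the proof of Theorem~\ref{th7} into two independent parts: group strategy-proofness, which is structural, and the $\frac{4}{3}$ approximation ratio, which reduces to an elementary inequality.

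First I would verify group strategy-proofness. The output distribution depends on the reported profile only through the pair $(n_1, n_2)$, so a coalition $G$ can influence the outcome only by changing how many of its members are counted into $N_1$ versus $N_2$. Let $k_{12}$ be the number of agents in $G$ who are truly in $N_1$ but whose misreport places them in $N_2$, and define $k_{21}$ symmetrically. If $k_{12} = k_{21}$, the distribution is unchanged and no member of $G$ strictly benefits. If $k_{12} > k_{21}$, the probability of locating at $0$ drops; pick any agent from $G \cap N_1$ who switched sides (at least one such agent exists because $k_{12} \geq 1$). Her true satisfactions are $s(0, \textbf{x}_i) = 1$ and $s(1, \textbf{x}_i) \leq 1$, so her expected satisfaction $\frac{n_1}{n} + \frac{n_2}{n} s(1, \textbf{x}_i)$ is non-decreasing in $n_1$ and hence weakly drops under the misreport. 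The case $k_{12} < k_{21}$ is symmetric.

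Next I would establish the approximation bound. Since $d_1(y, \textbf{x}_i)$ is piecewise-linear convex in $y$, each $s(y, \textbf{x}_i)$ is convex in $y$, and therefore so is $SS(y, \textbf{x})$; the maximum of a convex function on $[0,1]$ is achieved at an endpoint, so $SS(opt(\textbf{x}), \textbf{x}) = \max\{S_0, S_1\}$, where $S_0 = SS(0, \textbf{x})$ and $S_1 = SS(1, \textbf{x})$. By the definition of $N_1$, $d_1(0, \textbf{x}_i) = \Delta(\textbf{x}_i)$ for $i \in N_1$, so $s(0, \textbf{x}_i) = 1$ and hence $S_0 \geq n_1$; symmetrically $S_1 \geq n_2$; trivially $S_0, S_1 \leq n$. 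Assuming without loss of generality that $SS(opt) = S_1$,
\begin{equation*}
\frac{SS(opt(\textbf{x}), \textbf{x})}{SS(f(\textbf{x}), \textbf{x})} = \frac{n\, S_1}{n_1 S_0 + n_2 S_1} \leq \frac{n\, S_1}{n_1^2 + n_2 S_1},
\end{equation*}
which is monotone increasing in $S_1$; substituting the upper bound $S_1 = n$ gives $\frac{n^2}{n_1^2 + n_2 n}$, and the inequality $\frac{n^2}{n_1^2 + n_2 n} \leq \frac{4}{3}$ reduces, via $n_2 = n - n_1$, to $(2n_1 - n)^2 \geq 0$. The case $SS(opt) = S_0$ is handled symmetrically.

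The main conceptual step is spotting the right pair of bounds, $S_0 \geq n_1$ together with $S_1 \leq n$; once these are in hand the calculation collapses to a perfect square. A tightness example such as $n = 2$ with $\textbf{x}_1 = (0)$ and $\textbf{x}_2 = (\frac{1}{2})$, where $S_0 = 1$, $S_1 = 2$, and $SS(f(\textbf{x}), \textbf{x}) = \frac{3}{2}$, confirms that the ratio $\frac{4}{3}$ cannot be improved for this mechanism.
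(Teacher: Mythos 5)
Your proposal is correct and follows essentially the same route as the paper's proof: group strategy-proofness is established by identifying an agent who switched sides against her true preference (for whom $s$ at her preferred endpoint equals $1$, so shifting probability away from it cannot help her), and the $\frac{4}{3}$ bound comes from the same two estimates, $SS$ at an endpoint is at least the number of agents preferring that endpoint and $SS(opt(\textbf{x}),\textbf{x})\leq n$, after which both arguments reduce to the same perfect square $(2n_1-n)^2\geq 0$. The only cosmetic differences are that you justify why the optimum sits at an endpoint (convexity) and use a cleaner tight instance; neither changes the substance.
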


\begin{proof}
	Let $f$ denote Mechanism \ref{th4}. Given any location profile $\textbf{x}$,  we first consider the special case: $n_1=0$ or $n_2=0$. If $n_1=0$ or $n_2=0$, Mechanism \ref{th4} returns the facility location $y=1$ or $y=0$ with probability 1. No agent would like to misreport her locations. Furthermore, when $n_1=0$ or $n_2=0$, $SS(f(\textbf{x}),\textbf{x})=SS(opt(\textbf{x}),\textbf{x})$.
    
    Next, we assume that $n_1\neq0$ and $n_2\neq0$. Let $G\subseteq N$ denote a coalition. Suppose that agents in $G$ change the output of mechanism by misreporting their locations, namely  $n'_1>n_1$ or $n_1>n'_1$. $\textbf{x}'$ is the new location profile.
    
    If $n'_1>n_1$, then there at least exists an agent $i\in G$ in $N_2(\textbf{x})$ but she reports falsely in $N_1(\textbf{x})$. Then, by $s(1,\textbf{x}_i)=1$ and $n_1'-n_1=n_2-n_2'$, we have that 
    \begin{equation*}
      \begin{aligned}
          s(f(\textbf{x}),\textbf{x}_i)-s(f(\textbf{x}'),\textbf{x}_i)&=\frac{n_1}{n}s(0,\textbf{x}_i)+\frac{n_2}{n}s(1,\textbf{x}_i)-\frac{n_1'}{n}s(0,\textbf{x}_i)-\frac{n_2'}{n}s(1,\textbf{x}_i)\\
          &=\frac{n_1-n_1'}{n}s(0,\textbf{x}_i)+\frac{n_2-n_2'}{n}\\
          &=\frac{n_1-n_1'}{n}s(0,\textbf{x}_i)+\frac{n_1'-n_1}{n}\\
          &=\frac{n_1-n_1'}{n}(s(0,\textbf{x}_i)-1)>0. 
      \end{aligned}  
 \end{equation*}
This implies that agent $i$'s satisfaction strictly decreases. 

Similarly, if $n_1>n_1'$, then there at least exists an agent $j\in G$  in $N_1(\textbf{x})$ but she reports falsely in $N_2(\textbf{x})$. However, agent $j$ cannot increase her satisfaction.  Thus, Mechanism \ref{th4} is group strategy-proof. 
    
    We now show the approximation ratio of Mechanism \ref{th4}. The optimal facility location is either 0 or 1. Without loss of generality, assume that the optimal facility location is at 0. Let  $f(\textbf{x})$ be the probability distribution obtained by Mechanism \ref{th4}, it holds that 
	\begin{equation*}
		\begin{aligned}
			SS(f(\textbf{x}),\textbf{x})&=\mathbb{E}_{y\sim f(\textbf{x})}[SS(y,\textbf{x})]=\frac{n_1}{n} SS(0,\textbf{x})+\frac{n_2}{n} SS(1,\textbf{x})\\
            &\geq  \frac{n_1}{n} SS(opt(\textbf{x}),\textbf{x})+ \frac{n^2_2}{n^2}SS(opt(\textbf{x}),\textbf{x})\\
			&=\left[\left(\frac{n_2}{n}-\frac{1}{2}\right)^2+\frac{3}{4}\right]SS(opt(\textbf{x}),\textbf{x})\\
            &\geq\frac{3}{4}SS(opt(\textbf{x}),\textbf{x}).
		\end{aligned}
	\end{equation*}
     The first inequality holds since $SS(1,\textbf{x})\geq n_2$ and $SS(opt(\textbf{x}),\textbf{x})\leq n$. 
    
    Moreover, the approximation ratio of Mechanism \ref{th4} is tight. Consider a location profile $\textbf{x}$ with two agents, where $\textbf{x}_1=(0,1)$ and $\textbf{x}_2=(0,\frac{1}{2})$. For this profile, we can show that $opt(\textbf{x})=1$ and $SS(opt(\textbf{x}),\textbf{x})=2$. However, by Mechanism \ref{th4}, it holds that $SS(f(\textbf{x}),\textbf{x})=\frac{3}{2}$. Thus, the approximation ratio of Mechanism \ref{th4} is $\frac{4}{3}$.
    \qed
\end{proof}

\begin{theorem}
    Any randomized strategy-proof  mechanism has an approximation ratio of at least $\frac{17}{16}\approx1.0625$ for maximizing the social satisfaction.
\end{theorem}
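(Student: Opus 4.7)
The plan is to prove the lower bound via a two-profile strategy-proofness argument. I construct a pair of location profiles $\mathbf{x}$ and $\mathbf{x}'$ that differ only in the report of a single agent, chosen so that their optimal facility locations lie in opposing directions on $[0,1]$, and use strategy-proofness to show that no randomized mechanism can simultaneously be near-optimal on both. Since a single-location agent is the special case $\omega_i=1$ of a multi-location agent, it suffices to work with single-location reports, so I focus on a small instance (two or three agents) with one parameterized position $a\in(0,1/2)$ and consider a deviation that reflects the distinguished agent via $a\mapsto 1-a$.

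For any randomized strategy-proof mechanism $f$, let $\mu=f(\mathbf{x})$ and $\mu'=f(\mathbf{x}')$ be the output distributions. The sum-variant satisfaction of a single-location agent at $x_i$ is piecewise linear in $y$ with a single break at $x_i$, so $SS(y,\mathbf{x})$ is piecewise linear with breakpoints at the agent positions and is maximized at an endpoint of $[0,1]$. Strategy-proofness applied to the deviating agent in both directions,
\[
\mathbb{E}_\mu\bigl[s(y,\mathbf{x}_i)\bigr] \geq \mathbb{E}_{\mu'}\bigl[s(y,\mathbf{x}_i)\bigr] \quad\text{and}\quad \mathbb{E}_{\mu'}\bigl[s(y,\mathbf{x}'_i)\bigr] \geq \mathbb{E}_\mu\bigl[s(y,\mathbf{x}'_i)\bigr],
\]
yields a pair of linear inequalities coupling $\mu$ and $\mu'$. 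The ratios $SS(opt(\mathbf{x}),\mathbf{x})/\mathbb{E}_\mu[SS(y,\mathbf{x})]$ and the analogous one for $\mathbf{x}'$ are fractional linear functionals in $(\mu,\mu')$, so minimizing the worst-case ratio over all strategy-proof pairs reduces, by piecewise linearity, to a finite linear program over the candidate support points (the agent positions together with $0$ and $1$). Tuning the parameter $a$ then drives the minimax value to $16/17$, which gives the claimed lower bound $17/16$.

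The main obstacle is ensuring that the strategy-proofness constraints actually bind on the chosen instances. In many natural choices the deviating agent's preferred facility direction already coincides with the social optimum driven by the remaining agents, so the strategy-proofness inequalities are vacuously satisfied by the mechanism that places all its probability mass at that optimum, and no gap emerges. To obtain a binding constraint one must arrange the remaining agents so that, within $\mathbf{x}$, the social optimum lies in the direction \emph{opposite} to the deviating agent's own preferred direction, and symmetrically within $\mathbf{x}'$; only then do the two strategy-proofness inequalities pull the mechanism away from both optima in a quantitatively useful way. A secondary technical point is that $\mu$ may a priori place probability mass anywhere in $[0,1]$, not only at the endpoints, but the piecewise linearity of the satisfaction functions allows the optimization over $\mu$ to be reduced without loss to the finitely many breakpoints above, after which the remaining LP computation is a direct exercise.
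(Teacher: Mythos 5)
Your overall skeleton---two profiles differing in one agent's report, strategy-proofness inequalities in both directions, and reduction of the output distribution to the finitely many breakpoints of the piecewise-linear satisfaction functions---is the right kind of argument and matches the paper's strategy in spirit. However, there are two genuine gaps.

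First, the restriction to single-location agents is not harmless; it most likely destroys the bound. For $\omega_i=1$ you have $d_1=d_2$ and $\delta(\textbf{x}_i)=0$, so the sum-variant satisfaction collapses to $\frac{|y-x_i|}{\max\{x_i,1-x_i\}}$, which is exactly the max-variant satisfaction of an agent with midpoint $x_i$. In other words, your restricted instance class is precisely the class for which the paper proves only the weaker bound $\frac{4\sqrt{2}}{4+\sqrt{2}}\approx 1.0448$ (Theorem 10). The paper's $\frac{17}{16}$ construction is essentially multi-location: agent $1$ reports $\left(\frac{1}{6},\frac{1}{6},\frac{5}{6}\right)$, so her satisfaction is $\frac{2|y-\frac{1}{6}|+|y-\frac{5}{6}|-\frac{2}{3}}{7/6}$, a function with two breakpoints, unequal slopes, and a nonzero offset $\delta=\frac{2}{3}$; the deviation consolidates all three of her locations at $\frac{1}{6}$. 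No single-location agent produces a satisfaction function of that shape (a single-location agent's satisfaction is V-shaped with one breakpoint at which it vanishes), so your claim that "it suffices to work with single-location reports" needs either a proof that the bound transfers or must be abandoned. Note also that in the paper's construction both profiles have their optimum at the same endpoint $0$; your requirement that the two optima lie in "opposing directions" is not what makes the constraint bind here.

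Second, the quantitative content of the theorem is entirely missing: the step "tuning the parameter $a$ then drives the minimax value to $16/17$" is asserted without exhibiting the instance, the inequalities, or the arithmetic. Since the whole theorem is the number $\frac{17}{16}$, this cannot be left as an exercise; you would need to write down the explicit profiles, derive the bounds on $\mathbb{E}[|y-\cdot|]$ from strategy-proofness and from the assumed approximation guarantee (as the paper does with the quantities $L_1,R_1,L_2,R_2$), and solve for $\alpha$.
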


\subsection{Max-Variant Satisfaction}

In this subsection, we explore the case where the individual objective is the max-variant satisfaction. We present a deterministic group strategy-proof mechanism which is the best for maximizing the social satisfaction. Furthermore, we propose a randomized group strategy-proof mechanism and derive a lower bound on the approximation ratio of any randomized strategy-proof mechanism. 

Given a location profile $\textbf{x}$, let $S_1(\textbf{x})$ be the set of agents with midpoint location in $[0,\frac{1}{2}]$, $S_2(\textbf{x})$ be the set of agents with midpoint location in $(\frac{1}{2},1]$.

\begin{mechanism}\label{me5}
    Given a location profile $\textbf{x}$, if $|S_1(\textbf{x})| \geq |S_2(\textbf{x})|$, locate the facility at 1; otherwise, locate the facility at $0$.
\end{mechanism}

\begin{theorem}\label{th8}
    Mechanism $\ref{me5}$ is a group strategy-proof mechanism with approximation ratio of $2$ for maximizing the social satisfaction. 
\end{theorem}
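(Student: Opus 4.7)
The plan is to lean on Proposition~1, which reduces every agent's max-variant satisfaction to a function of her midpoint $c_i$ alone. I would first derive the closed form in the obnoxious setting: since $\Delta(\textbf{x}_i)-\delta(\textbf{x}_i)=\max(c_i,1-c_i)$ and $d_2(y,\textbf{x}_i)-\delta(\textbf{x}_i)=|y-c_i|$,
$$s(y,\textbf{x}_i)=\frac{|y-c_i|}{\max(c_i,1-c_i)}.$$
From this formula one immediately reads off that any agent with $c_i\le 1/2$ gets satisfaction $1$ at $y=1$ and at most $1$ at $y=0$, and the mirror statement holds for $c_i>1/2$. This is the basic input for both halves of the proof.

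For group strategy-proofness, the mechanism depends on the reports only through the two counts $|S_1|$ and $|S_2|$, so for any coalition $G$ and deviation $\textbf{x}_G'$ there are two scenarios. If the output does not flip, no satisfaction changes and the ``$\ge$'' condition in the definition is automatic. Otherwise, assume the original output is $y=1$ (the case $y=0$ is symmetric), so $|S_1|\ge|S_2|$ but $|S_1'|<|S_2'|$ after the deviation. Letting $b$ count agents in $G\cap S_1$ reporting inside $S_2'$ and $c$ count agents in $G\cap S_2$ reporting inside $S_1'$, a direct count gives $|S_1'|-|S_2'|=|S_1|-|S_2|-2b+2c$, so the reversal forces $b>c\ge 0$. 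Hence some agent $i\in G$ truly has $c_i\le 1/2$ yet is reported into $S_2'$; her truthful satisfaction drops from $1$ at $y=1$ to $c_i/(1-c_i)\le 1$ at $y=0$, furnishing the required member of the coalition who fails to strictly improve.

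For the approximation ratio, I note that $SS(y,\textbf{x})=\sum_i |y-c_i|/\max(c_i,1-c_i)$ is convex in $y$ on $[0,1]$ as a nonnegative sum of absolute-value terms, so $opt(\textbf{x})\in\{0,1\}$. Assume without loss of generality $|S_1|\ge|S_2|$ so the mechanism outputs $y=1$; if $opt(\textbf{x})=1$ the ratio is $1$, so take $opt(\textbf{x})=0$. Then $SS(0,\textbf{x})\le n$ because each summand lies in $[0,1]$, while $SS(1,\textbf{x})\ge|S_1|\ge n/2$ because every $i\in S_1$ contributes exactly $1$ at $y=1$. These two inequalities yield $SS(opt(\textbf{x}),\textbf{x})/SS(f(\textbf{x}),\textbf{x})\le 2$. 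Tightness I would verify on a two-agent profile with $c_1=1/2$ and $c_2=1-\epsilon$: the mechanism outputs $y=1$ with $SS(1,\textbf{x})=1+\epsilon/(1-\epsilon)\to 1$, whereas $SS(0,\textbf{x})=2$, so the ratio tends to $2$ as $\epsilon\to 0^+$.

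The hardest step is the bookkeeping in the flipping-output case of the strategy-proofness argument. Extracting a guilty agent in $G\cap S_1$ who is misreported into $S_2'$ crucially relies on the tie-breaking rule favoring $y=1$ when $|S_1|\ge|S_2|$, since only then does strict reversal force $b>c\ge 0$. Once that agent is identified, the closed form from Proposition~1 does the remaining work.
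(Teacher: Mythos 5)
Your proof is correct and follows essentially the same route the paper takes for its deterministic obnoxious-facility mechanisms (cf.\ the proof of Theorem~5): group strategy-proofness by exhibiting a coalition member who truly prefers the original output yet must have misreported across the $\frac{1}{2}$ threshold to flip it, and the ratio bound from $SS(f(\textbf{x}),\textbf{x})\ge n/2$ together with $SS(opt(\textbf{x}),\textbf{x})\le n$. Your explicit $b>c$ counting and the convexity argument locating $opt(\textbf{x})\in\{0,1\}$ are just more detailed renderings of the same idea, and the tightness family is fine.
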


   Based on the results of Mei et al. \cite{mei2019facility} and Proposition 1, we conclude that Mechanism \ref{me5} is the best deterministic strategy-proof mechanism.  Now, we consider a randomized mechanism.

\begin{mechanism} \label{me6}
	 Given a location profile $\textbf{x}$, locate the facility at  $1$ with probability $\frac{|S_1(\textbf{x})|}{n}$ and $0$ with probability $\frac{|S_2(\textbf{x})|}{n}$. 
\end{mechanism}

\begin{theorem}\label{th9}
	Mechanism $\ref{me6}$ is   a group strategy-proof  mechanism with approximation ratio of $\frac{4}{3}$ for maximizing the social satisfaction. 
\end{theorem}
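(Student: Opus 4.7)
The plan is to mirror the proof of Theorem \ref{th7}, using Proposition 1 to reduce every agent's satisfaction to a function of her midpoint $c_i$ alone. A direct computation from the obnoxious max-variant definition gives
\begin{equation*}
s(y,\textbf{x}_i)=\frac{|y-c_i|}{\max(c_i,1-c_i)},
\end{equation*}
so that $s(1,\textbf{x}_i)=1$ for every $i\in S_1(\textbf{x})$ and $s(0,\textbf{x}_i)=1$ for every $i\in S_2(\textbf{x})$. With this identity, the whole analysis becomes the structural analogue of the $(N_1,N_2)$ dichotomy used in Theorem \ref{th7}, with $(S_1,S_2)$ playing the role of $(N_1,N_2)$.

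For group strategy-proofness, I would take an arbitrary coalition $G\subseteq N$ with joint misreport $\textbf{x}'_G$ producing profile $\textbf{x}'$. If $|S_1(\textbf{x}')|=|S_1(\textbf{x})|$, the output distribution is identical and no agent changes her satisfaction. Otherwise, without loss of generality $|S_1(\textbf{x}')|>|S_1(\textbf{x})|$. Since agents outside $G$ do not alter their midpoints, at least one $i\in G$ must lie in $S_2(\textbf{x})\cap S_1(\textbf{x}')$. Using $|S_1(\textbf{x})|+|S_2(\textbf{x})|=n$, a short calculation gives
\begin{equation*}
s(f(\textbf{x}),\textbf{x}_i)-s(f(\textbf{x}'),\textbf{x}_i)=\frac{|S_1(\textbf{x})|-|S_1(\textbf{x}')|}{n}\bigl(s(1,\textbf{x}_i)-s(0,\textbf{x}_i)\bigr).
\end{equation*}
The first factor is negative, and since $c_i>1/2$, $s(1,\textbf{x}_i)-s(0,\textbf{x}_i)=(1-2c_i)/c_i<0$, so the difference is strictly positive, meaning agent $i$'s satisfaction strictly decreases under $\textbf{x}'$. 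The opposite case $|S_1(\textbf{x}')|<|S_1(\textbf{x})|$ is symmetric.

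For the approximation bound, $SS(y,\textbf{x})=\sum_i |y-c_i|/\max(c_i,1-c_i)$ is a sum of convex functions in $y$, so the maximum on $[0,1]$ is attained at an endpoint, i.e., $opt(\textbf{x})\in\{0,1\}$. Assume WLOG $opt(\textbf{x})=1$. Then $SS(0,\textbf{x})=\sum_{i\in S_1}c_i/(1-c_i)+|S_2(\textbf{x})|\geq|S_2(\textbf{x})|$ and $SS(opt(\textbf{x}),\textbf{x})\leq n$. Writing $n_1=|S_1(\textbf{x})|$ and $n_2=|S_2(\textbf{x})|$,
\begin{equation*}
SS(f(\textbf{x}),\textbf{x})=\tfrac{n_1}{n}SS(1,\textbf{x})+\tfrac{n_2}{n}SS(0,\textbf{x})\geq\Bigl[\tfrac{n_1}{n}+\tfrac{n_2^2}{n^2}\Bigr]SS(opt(\textbf{x}),\textbf{x})\geq\tfrac{3}{4}SS(opt(\textbf{x}),\textbf{x}),
\end{equation*}
the last step being the minimum of $1-t+t^2$ on $[0,1]$ at $t=1/2$. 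For tightness, I would take $\textbf{x}_1=(0,0)$ and $\textbf{x}_2=(\delta,1)$ with $\delta\to 0^+$, so that both agents lie in $S_1$ and $S_2$ respectively, giving $SS(opt(\textbf{x}),\textbf{x})\to 2$ and $SS(f(\textbf{x}),\textbf{x})\to 3/2$, approaching ratio $4/3$.

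The main obstacle I anticipate is mostly bookkeeping around Proposition 1: one must verify that the clean form $s(y,\textbf{x}_i)=|y-c_i|/\max(c_i,1-c_i)$ really does make $S_1$ and $S_2$ agents behave exactly like single-location agents on the left and right halves, and that the boundary case $c_i=1/2$ (where the agent is indifferent between $0$ and $1$) never breaks the GSP witness argument, since the witness we extract always comes from $S_2(\textbf{x})$ and hence has $c_i$ strictly above $1/2$.
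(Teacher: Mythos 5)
Your proposal is correct and is exactly the argument the paper intends: its own ``proof'' of this theorem is a one-line deferral to Theorem \ref{th7}, and you have carried out that transfer faithfully via Proposition 1 (reducing each agent to her midpoint, so $S_1,S_2$ play the roles of $N_1,N_2$), including the GSP witness computation, the $1-t+t^2\geq\frac34$ bound, and a valid limiting tight example. The only cosmetic point is that in the symmetric case $|S_1(\textbf{x}')|<|S_1(\textbf{x})|$ the witness comes from $S_1(\textbf{x})$ and may have $c_i=\frac12$, but such an agent has satisfaction $1$ under any distribution over $\{0,1\}$, so the required non-strict inequality still holds.
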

\begin{proof}
    The proof of Theorem \ref{th9} follows a similar approach to that of Theorem \ref{th7}. \qed
\end{proof}

As discussed before, when considering the max-variant satisfaction, only the midpoint location of each agent needs to be taken into account. Based on this observation, we obtain the lower bound  of any randomized strategy-proof mechanism under the social satisfaction objective.

\begin{theorem}\label{th5}
		Any  randomized strategy-proof mechanism has an approximation ratio of at least $\frac{4\sqrt{2}}{4+\sqrt{2}}\approx1.0448$ for maximizing the social satisfaction. 
\end{theorem}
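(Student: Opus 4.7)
The plan is to use Proposition 1 to reduce the problem to the midpoint representation: by Proposition 1 the max-variant satisfaction of agent $i$ depends only on her midpoint $c_i$, so each instance is determined by a midpoint profile $\textbf{c}=(c_1,\ldots,c_n)$ and the social satisfaction is $SS(y,\textbf{c})=\sum_i |y-c_i|/\max(c_i,1-c_i)$. Because this is a convex, piecewise-linear function of $y$, its maximum is always attained at $y\in\{0,1\}$, and any interior mass a randomized mechanism places strictly reduces $SS$; consequently I may restrict attention to mechanisms whose output distribution on every profile is supported on $\{0,1\}$, parametrised by a single probability mass at $y=0$.

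I would then construct a short family of midpoint profiles connected pairwise by single-agent deviations, chosen so that the social optimum lies at $y=0$ on some profiles and at $y=1$ on others. Writing out strategy-proofness for each such deviation, the true satisfaction $|y-c_i|/\max(c_i,1-c_i)$ evaluated at $y\in\{0,1\}$ yields a pair of linear inequalities (one for each direction of the deviation) that, for midpoints strictly on the same side of $1/2$, force the two endpoint-mass parameters to coincide. Iterating across the whole family, the mass parameters are forced to a single common value $p\in[0,1]$.

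With this coupling in hand, the approximation ratio on each instance becomes an explicit rational function of $p$: on the instances with optimum at $y=0$ it is monotone in $p$, while on those with optimum at $y=1$ it is monotone in the opposite direction. Equalising the two binding ratios and solving yields a quadratic in the optimal ratio $r$ of the form $(4-r)^2=8r^2$, whose positive root is $r=4/(1+2\sqrt{2})=\frac{4\sqrt{2}}{4+\sqrt{2}}$, giving the claimed lower bound.

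The main obstacle is selecting the midpoints so that the resulting SP-coupled min–max optimisation produces exactly the $\sqrt{2}$-involving balance rather than a different numerical constant: slightly different choices (for example symmetric $(\epsilon,1-\epsilon)$ midpoints perturbed to one endpoint) yield arithmetically different equations and give either stronger or incomparable bounds. The delicate step is pinning down the midpoints and deviations that (i) make the SP constraints bind in both directions across the whole family, so that only one free parameter remains, and (ii) force the ratios on the two extremal instances to be asymmetric in $p$ in the precise way that the identity $(4-r)^2=8r^2$ demands.
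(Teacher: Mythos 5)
There is a genuine gap at the very first reduction. For a lower bound you must rule out \emph{every} randomized strategy-proof mechanism, so you cannot simply restrict attention to mechanisms supported on $\{0,1\}$. It is true that $SS(\cdot,\textbf{c})$ is maximized at an endpoint, so pushing interior mass to $\{0,1\}$ weakly increases the social satisfaction on each profile; but the modified mechanism need not remain strategy-proof. The individual satisfactions $|y-c_i|/\max\{c_i,1-c_i\}$ achievable by interior points (for instance the value $0$ at $y=c_i$) cannot be reproduced by any distribution on $\{0,1\}$, so there is no SP-preserving projection onto endpoint-supported mechanisms, and a mechanism could use interior randomization precisely to satisfy the incentive constraints while beating the bound your restricted analysis would give. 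Everything downstream of this step --- the single parameter $p$, the coupling of endpoint masses across profiles, the quadratic $(4-r)^2=8r^2$ --- rests on this unjustified reduction. The second problem is that the gadget is never exhibited: you say yourself that you have not pinned down the midpoints and deviations, and the quadratic is reverse-engineered from the target constant rather than derived, so the proposal is a plan rather than a proof.

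For comparison, the paper's argument avoids both issues while staying with a minimal construction. It takes the two-agent midpoint profile $\textbf{c}=(\epsilon,1-\epsilon)$, observes that $SS(f(\textbf{c}),\textbf{c})\leq SS(opt(\textbf{c}),\textbf{c})$ forces $\mathbb{E}_{y\sim f(\textbf{c})}[|y-\epsilon|]\leq \frac{1}{2}$ (without loss of generality), and uses a single deviation to $\textbf{c}'=(0,1-\epsilon)$. Strategy-proofness transfers the bound to $\mathbb{E}_{y\sim f(\textbf{c}')}[|y-\epsilon|]\leq \frac{1}{2}$, which caps the tail probability $q=\Pr[y\geq 1-\epsilon]$ by $\frac{1}{2(1-2\epsilon)}$ for an \emph{arbitrary} output distribution; then $SS(f(\textbf{c}'),\textbf{c}')\leq 1+\frac{\epsilon}{1-\epsilon}q$ yields a ratio of at least $\frac{2(1-2\epsilon)}{2(1-\epsilon)(1-2\epsilon)+\epsilon}$, optimized at $\epsilon=\frac{1}{2}-\frac{\sqrt{2}}{4}$ to give $\frac{4\sqrt{2}}{4+\sqrt{2}}$. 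If you want to salvage your approach, replace the support restriction by distribution-free quantities of this kind (conditional expectations and tail probabilities), and actually write down the profiles and verify the inequalities.
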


\begin{proof}
    Consider a midpoint location profile $\textbf{c}=(\epsilon,1-\epsilon)$, where $\epsilon<\frac{1}{2}$, the optimal facility location is $opt(\textbf{c})=0$. Then, $SS(opt(\textbf{c}),\textbf{c})=\frac{\epsilon}{1-\epsilon}+\frac{1-\epsilon}{1-\epsilon}=\frac{1}{1-\epsilon}$. Assume that there exists a  randomized  strategy-proof mechanism  $f$ with approximation ratio $\alpha$. Note that $$SS(f(\textbf{c)},\textbf{c})=\mathbb{E}_{y\sim f(\textbf{c})}\left[\frac{|y-\epsilon|}{1-\epsilon}+\frac{|y-(1-\epsilon)|}{1-\epsilon}\right]\leq \frac{1}{1-\epsilon}=SS(opt(\textbf{c}),\textbf{c}).$$
		
		We can get $\mathbb{E}_{y\sim f(\textbf{c})}[|y-\epsilon|+|y-(1-\epsilon)|]\leq 1$. Without loss of generality, we assume that $\mathbb{E}_{y\sim f(\textbf{c})}[|y-\epsilon|]\leq \frac{1}{2}$.
		
		Consider a new location profile $\textbf{c}'=(0,1-\epsilon)$, which means that agent $1$ misreports her location to $0$ rather than her true location $\epsilon$. In $\textbf{c}'$, the optimal facility location is at $1$, yielding the social satisfaction $SS(opt(\textbf{c}'),\textbf{c}')=\frac{1}{1-\epsilon}$. Since the mechanism $f$ is strategy-proof, then 
		\begin{equation*}
			s(f(\textbf{c}'),\epsilon)=\mathbb{E}_{y\sim f(\textbf{c}')}\left[\frac{|y-\epsilon|}{1-\epsilon}\right]\leq \mathbb{E}_{y\sim f(\textbf{c})}\left[\frac{|y-\epsilon|}{1-\epsilon}\right]=s(f(\textbf{c}),\epsilon).
		\end{equation*}
		It means $\mathbb{E}_{y\sim f(\textbf{c}')}[|y-\epsilon|]\leq \mathbb{E}_{y\sim f(\textbf{c})}[|y-\epsilon|]\leq \frac{1}{2}$.
		
		Let $p(y\geq 1-\epsilon)=q$ in $f(\textbf{c}')$. It holds that 
		\begin{equation*}
			\begin{aligned}
				\frac{1}{2} &\geq \mathbb{E}_{y\sim f(\textbf{c}')}[|y-\epsilon|]=\mathbb{E}_{y\sim f(\textbf{c}')}[|y-\epsilon|| y<1-\epsilon](1-q)\\
				&+\mathbb{E}_{y\sim f(\textbf{c}')}[|y-\epsilon|| y\geq 1-\epsilon]q\geq (1-2 \epsilon)q,
			\end{aligned}
		\end{equation*}
	 therefore, $q\leq \frac{1}{2(1-2\epsilon)}$.

		Note that in $\textbf{c}'$, 
		\begin{equation*}
			\begin{aligned}
				\mathbb{E}_{y\sim f(\textbf{c}')}\left[SS(y,\textbf{c}')| y<1-\epsilon\right]&=\mathbb{E}_{y\sim f(\textbf{c}')}\left[y+\frac{|y-(1-\epsilon)|}{1-\epsilon}\bigg| y<1-\epsilon\right]\\
				&\leq \frac{1}{1-\epsilon}\mathbb{E}_{y\sim f(\textbf{c}')}\left[y+|y-(1-\epsilon)|| y<1-\epsilon\right]=1.
			\end{aligned}
		\end{equation*}
		
		Hence, 
        
		\begin{equation*}
			\begin{aligned}
				\mathbb{E}_{y\sim f(\textbf{c}')}[SS(y,\textbf{c}')]&=\mathbb{E}_{y\sim f(\textbf{c}')}[SS(y,\textbf{c}')|y<1-\epsilon](1-q)\\
                &+\mathbb{E}_{y\sim f(\textbf{c}')}[SS(y,\textbf{c}')| y\geq 1-\epsilon]q\\
				&\leq (1-q)+SS(opt(\textbf{c}'),\textbf{c}')q\\
				&=(1-q)+\frac{1}{1-\epsilon}q=1+\frac{\epsilon}{1-\epsilon}q.
			\end{aligned}
		\end{equation*}
        
		From $q\leq \frac{1}{2(1-2\epsilon)}$, we can get the approximation ratio
		\begin{equation*}
			\alpha \geq \frac{opt(\textbf{c}')}{\mathbb{E}_{y\sim f(\textbf{c}')}[SS(y,\textbf{c}')]}=\frac{\frac{1}{1-\epsilon}}{1+\frac{\epsilon}{1-\epsilon}q}\geq \frac{\frac{1}{1-\epsilon}}{1+\frac{\epsilon}{1-\epsilon}\cdot\frac{1}{2(1-2\epsilon)}}=\frac{2(1-2\epsilon)}{2(1-\epsilon)(1-2\epsilon)+\epsilon}.
		\end{equation*}
		
		When $\epsilon=\frac{1}{2}-\frac{\sqrt{2}}{4}$, $\alpha\geq \frac{4\sqrt{2}}{4+\sqrt{2}}\approx1.0448$ which implies any randomized strategy-proof  mechanism has an approximation ratio of at least $1.0448$.\qed
\end{proof}

\subsection{Discussions}

For maximizing the minimum  satisfaction, Mechanism \ref{th3} has unbound approximation ratio for the sum-variant satisfaction. Consider a location profile $\textbf{x}$ with two agents, where $\textbf{x}_1=(0,0)$ and $\textbf{x}_2=(1,1)$. Mechanism \ref{th3} locates the facility at 0, which gives the minimum satisfaction of 0. However, it holds that $opt(\textbf{x})=\frac{1}{2}$ and $MS(opt(\textbf{x}),\textbf{x})=\frac{1}{2}$. Then an unbounded approximation ratio is derived. 

In fact, any deterministic mechanism whose output corresponds to a location that minimizes the distance for some agent has an unbounded approximation ratio. We conjecture that any deterministic strategy-proof mechanism gives  unbounded approximation ratio for maximizing the minimum  of the sum-variant satisfaction and the minimum of  the max-variant satisfaction. 

\section{Conclusions and Future Work}
In this paper, we studied  mechanism design for desirable and obnoxious facility location games with satisfaction and multi-location agents. For the individual objective, we studied two types by different distance functions:  the sum-variant satisfaction and the max-variant satisfaction. We considered two social objectives: maximizing  the social satisfaction, and maximizing the minimum satisfaction. We proposed some deterministic and randomized (group) strategy-proof mechanisms with small approximation ratios for our model. 

There are many other interesting problems for future researches. In terms of our results, it would be interesting to  close the gap between upper and lower bounds of  deterministic and randomized strategy-proof mechanisms. One can also generalize our model in $k$-facility location problem, where $k \geq 2$. Additionally, it might make sense to extend the line setting to trees, circles or general cases for  multi-location agents. Finally, other preference models (for example,  fractional preference \cite{fong2018facility}) can be discussed  with satisfaction. 
\subsubsection{\ackname} This research was supported in part by the National Natural Science Foundation of China (12201590, 12171444) and Natural Science Foundation of Shandong Province (ZR2024MA031). We thank the anonymous reviewers  for their helpful discussions.

\subsubsection{\discintname} The authors declare that they have no conflict of interest. 

\bibliographystyle{splncs04}
\bibliography{main}
\newpage
\noindent
{\large\bfseries A Appendix}
\vspace{0.5cm}

\noindent
\textbf{Proof of Proposition 1.}
\begin{proof}
    Given a location profile $\textbf{x}$ and a facility location $y$,  it suffices to  prove that the ratio 
 $ \frac{d_2(y,\textbf{x}_i)-\delta(\textbf{x}_i)}{\Delta(\textbf{x}_i)-\delta(\textbf{x}_i)} $ depends solely on the midpoint $c_i$  according to the definition of satisfaction of agent $i$. 

For agent $i$, we have that 
\begin{equation*}
\begin{aligned}
     \frac{d_2(y,\textbf{x}_i)-\delta(\textbf{x}_i)}{\Delta(\textbf{x}_i)-\delta(\textbf{x}_i)}
     &=\frac{(|y-c_i|+\frac{x_{i\omega_i}-x_{i1}}{2})-\frac{x_{i\omega_i}-x_{i1}}{2}}{\max_{z\in I}(|z-c_i|+\frac{x_{i\omega_i}-x_{i1}}{2})-\frac{x_{i\omega_i}-x_{i1}}{2}}\\
    &=\frac{|y-c_i|}{\max_{z\in I}|z-c_i|}=\frac{|y-c_i|}{\max\{c_i,1-c_i\}}.
\end{aligned}
\end{equation*}

The proof is completed.\qed
\end{proof}

\noindent
\textbf{Proof of Theorem 1.}
\begin{proof}
    Firstly, let us prove that Mechanism \ref{me1} is group strategy-proof. Let $f$ denote Mechanism \ref{me1}. For  any location profile $\textbf{x}$, let $G\subseteq N$ be any coalition of agents. $\textbf{x}'_G$ is a joint deviation of the agents in $G$. If $f(\textbf{x})=f(\textbf{x}'_G,\textbf{x}_{-G})$, no agent can improve her satisfaction. If $f(\textbf{x})\neq f(\textbf{x}'_G,\textbf{x}_{-G})$, we assume $f(\textbf{x})<f(\textbf{x}'_G,\textbf{x}_{-G})$, there exists an agent $i\in G$ who meets $med(\textbf{x}'_i)<f(\textbf{x})\leq med(\textbf{x}_i)$. $med(\textbf{x}_i)$ is the location where the distance sum  between agent's locations and  facility location is the minimum. Then, we have that $$\sum_{j=1}^{\omega_i}{|x_{ij}-med(\textbf{x}'_i)}|>\sum_{j=1}^{\omega_i}{|x_{ij}-f(\textbf{x})|}\geq \sum_{j=1}^{\omega_i}{|x_{ij}-med(\textbf{x}_i)}|.$$ Then, $s(f(\textbf{x}'_G,\textbf{x}_{-G}),\textbf{x}_i)<s(f(\textbf{x}),\textbf{x}_i)$. Therefore, Mechanism \ref{me1} is group strategy-proof.  
    
    Next we show the approximation ratio of Mechanism 1.   For $i=1,\dots,\lfloor\frac{n}{2}\rfloor$, let $\textbf{p}_i=(\textbf{x}_i,\textbf{x}_{n-i+1})$ and $SS(y,\textbf{p}_i)$ be the location profile and the sum of the satisfactions of agents $i$ and $n-i+1$, respectively. 
	
	If $n$ is even, then $SS(y, \textbf{x}) = SS(y, \textbf{p}_1) +\dots + SS(y, \textbf{p}_{\frac{n}{2}}),$ which implies that $$\frac{SS(opt(\textbf{x}), \textbf{x})}{SS(f(\textbf{x}), \textbf{x})}\leq \max_{i=1,\dots,\frac{n}{2}}\left\{\frac{SS(opt(\textbf{x}), \textbf{p}_i)}{SS(f(\textbf{x}), \textbf{p}_i)}\right\}.$$
	
	If $n$ is odd, $SS(y, \textbf{x}) = SS(y, \textbf{p}_1) + \cdots +SS(y, \textbf{p}_{\lfloor\frac{n}{2} \rfloor}) + s(y, x_{\lceil\frac{n}{2}\rceil})$. Analogously, we get that $$\frac{SS(opt(\textbf{x}), \textbf{x})}{SS(f(\textbf{x}), \textbf{x})}\leq \max\left\{ \frac{SS(opt(\textbf{x}), \textbf{p}_i)}{SS(f(\textbf{x}), \textbf{p}_i)} |_{i=1,\dots,\lfloor\frac{n}{2}\rfloor},\frac{s(opt(\textbf{x}), \textbf{x}_{\lceil\frac{n}{2}\rceil})}{s(f(\textbf{x}), \textbf{x}_{\lceil\frac{n}{2}\rceil})}\right\}.$$
	
	 By $f(\textbf{x})=med(\textbf{x}_{\lceil\frac{n}{2}\rceil})$ and $s(med(\textbf{x}_{\lceil\frac{n}{2}\rceil}), \textbf{x}_{\lceil\frac{n}{2}\rceil})=1$,   we have  that $$\frac{s(opt(\textbf{x}), \textbf{x}_{\lceil\frac{n}{2}\rceil})}{s(f(\textbf{x}), \textbf{x}_{\lceil\frac{n}{2}\rceil})} \leq 1.$$
	
	For $i=1,\dots,\lfloor\frac{n}{2}\rfloor$, let $opt(\textbf{p}_i)$ denote the optimal facility location when we only consider  agents $i$ and $n-i+1$.  Note that $SS(opt(\textbf{x}), \textbf{p}_i) \leq SS(opt(\textbf{p}_i),\textbf{p}_i) \leq2$ and $f(\textbf{x})\in  [med(\textbf{x}_i),med( \textbf{x}_{n-i+1})]$. Hence, we get that for $i=1,\dots,\lfloor\frac{n}{2}\rfloor$, 
    \begin{equation}\label{eq11}
        \frac{SS(opt(\textbf{x}), \textbf{p}_i)}{SS(f(\textbf{x}), \textbf{p}_i)}\leq \frac{SS(opt(\textbf{p}_i), \textbf{p}_i)}{\min_{y\in[med(\textbf{x}_i),med( \textbf{x}_{n-i+1})]}SS(y, \textbf{p}_i)}.
    \end{equation}
	
	It is sufficient to show that for  $i=1,\dots,\lfloor\frac{n}{2}\rfloor$, the right side of Eq. (\ref{eq11}) is at most 2.
	
	Let $\hat{y}_i$ be the location where  $SS(y, \textbf{p}_i)$ achieves its minimum on the interval $[med(\textbf{x}_i),med( \textbf{x}_{n-i+1})]$, i.e., $\hat{y}_i= \arg \min_{y\in [med(\textbf{x}_i),med( \textbf{x}_{n-i+1})]}{SS(y, \textbf{p}_i)}$. We have that $$SS(y,\textbf{p}_i)=\frac{\Delta(\textbf{x}_i)-\sum_{j=1}^{\omega_i} |y-x_{ij}|}{\Delta(\textbf{x}_i)-\delta(\textbf{x}_i)}+\frac{\Delta(\textbf{x}_{n-i+1})-\sum_{j=1}^{\omega_{n-i+1}} |y-x_{ij}|}{\Delta(\textbf{x}_{n-i+1})-\delta(\textbf{x}_{n-i+1})}.$$

   Since  $SS(med(\textbf{x}_i),\textbf{p}_i)\geq1$, $SS(med(\textbf{x}_{n-i+1}),\textbf{p}_i)\geq1$ and $SS(y, \textbf{p}_i)$ is a concave function  of $y$, we have $SS(\hat{y}_i, \textbf{p}_i)\geq 1$. Therefore, $$\frac{SS(opt(\textbf{x}), \textbf{p}_i)}{SS(f(\textbf{x}), \textbf{p}_i)} \leq 2.$$
    
	Moreover, the approximation ratio of Mechanism \ref{me1} is tight. Consider a location profile $\textbf{x}$ with two agents, where  $\textbf{x}_1=(0,\frac{1}{2})$ and $\textbf{x}_2=(\frac{1}{2},1)$. For this profile, we have that $opt(\textbf{x})=\frac{1}{2}$ and $SS(opt(\textbf{x}),\textbf{x})=2$. However, Mechanism $1$ locates the facility at $0$ and achieves $SS(f(\textbf{x}),\textbf{x})=1$. Thus, the approximation ratio of Mechanism 1 is $2$. \qed
\end{proof}

\noindent
\textbf{Proof of Theorem \ref{th2}.}

\begin{proof}
     Mechanism 2 is obviously group strategy-proof. Hence we only need to show that the approximation ratio $\alpha$ of Mechanism 2 is 2.

     Let $f$ denote Mechanism \ref{me2}. For any  \textbf{x}, we known that  
     $MS(f(\textbf{x}),\textbf{x})=\min_{i\in N}s(f(\textbf{x}),\textbf{x}_i)$ and $MS(opt(\textbf{x}),\textbf{x})\leq 1$.  If we can show that for each agent $i$, $$s(f(\textbf{x}),\textbf{x}_i)=\frac{\max_{z\in I}\sum_{j=1}^{\omega_i}|z-x_{ij}|-\sum_{j=1}^{\omega_i}|\frac{1}{2}-x_{ij}|}{\max_{z\in I}\sum_{j=1}^{\omega_i}|z-x_{ij}|-\sum_{j=1}^{\omega_i}|med(\textbf{x}_i)-x_{ij}|}\geq\frac{1}{2},$$ then the proof is completed. 
    
    Without loss of generality, we assume that $\sum_{j=1}^{\omega_i}x_{ij}\geq\sum_{j=1}^{\omega_i}(1-x_{ij})$. Then, 
    \begin{equation*}
        \begin{aligned}   \sum_{j=1}^{\omega_i}x_{ij}&\geq\sum_{j=1}^{\omega_i}|x_{ij}-(1-med(\textbf{x}_i)|\geq2 \sum_{j=1}^{\omega_i}\left|x_{ij}-\frac{1}{2}\right|-\sum_{j=1}^{\omega_i}|x_{ij}-med(\textbf{x}_i)|.\\
        \end{aligned}
    \end{equation*}

    Moreover, we have that  $$ \sum_{j=1}^{\omega_i}x_{ij}-\sum_{j=1}^{\omega_i}\left|x_{ij}-\frac{1}{2}\right|\geq\frac{1}{2}\left(\sum_{j=1}^{\omega_i}x_{ij}-\sum_{j=1}^{\omega_i}|x_{ij}-med(\textbf{x}_i)|\right).$$

      Similarly, if $\sum_{j=1}^{\omega_i}x_{ij}<\sum_{j=1}^{\omega_i}(1-x_{ij})$, then  $$ \sum_{j=1}^{\omega_i}(1-x_{ij})-\sum_{j=1}^{\omega_i}\left|x_{ij}-\frac{1}{2}\right|\geq\frac{1}{2}\left(\sum_{j=1}^{\omega_i}(1-x_{ij})-\sum_{j=1}^{\omega_i}|x_{ij}-med(\textbf{x}_i)|\right).$$
    
We have that $s(f(\textbf{x}),\textbf{x}_i)\geq\frac{1}{2}$ for any agent $i\in N$. Then, it holds that $MS(f(\textbf{x}),\textbf{x})\geq\frac{1}{2}$ and $\frac{MS(opt(\textbf{x}),\textbf{x})}{MS(f(\textbf{x}),\textbf{x})}\leq 2$.

Moreover, the approximation ratio of Mechanism 2 is tight. Consider a location profile $\textbf{x}$ with two agents, where  $\textbf{x}_1=(0,\frac{1}{2})$ and $\textbf{x}_2=(0,0)$. For this profile $\textbf{x}$, we have that $opt(\textbf{x})=0$ and $MS(opt(\textbf{x}),\textbf{x})=1$. However,  $MS(\frac{1}{2},\textbf{x})=\frac{1}{2}$. Therefore, the  approximation ratio of Mechanism 2 is $2$. \qed
\end{proof}

\noindent
\textbf{Proof of Theorem 4.}
\begin{proof}
    Similar to the proof of Theorem 2, Mechanism 2 achieves that the max-variant satisfaction of each agent is at least $\frac{1}{2}$. For any location profile $\textbf{x}$, it holds that $MS(opt(\textbf{x}),\textbf{x})\leq1$. Therefore, the approximation ratio of Mechanism 2 is no more than 2. 
    
    Furthermore, this ratio can be verified to be tight. Consider a instance $\textbf{x}$ with $n$ agents, where each agent has a location at $0$. Then, we have $opt(\textbf{x})=0$, yielding $MS(opt(\textbf{x}),\textbf{x})=1$. The facility location output by Mechanism 2 is $\frac{1}{2}$, giving the minimum satisfaction of $\frac{1}{2}$. Hence, the approximation ratio of Mechanism 2 is 2. \qed
\end{proof}

\noindent
\textbf{Proof of Theorem 5.}

\begin{proof}
    Firstly, we show that Mechanism \ref{th3} is a group strategy-proof mechanism. Let $f$ denote Mechanism \ref{th3}. Given a location profile $\textbf{x}$, each agent in $N_1(\textbf{x})$ gets satisfaction of $1$ if $y=0$, each agent in $N_2(\textbf{x})$ gets satisfaction of $1$ if $y=1$. Let $f(\textbf{x})$ be the facility location returned by Mechanism \ref{th3} and assume $f(\textbf{x})=0$. If there exists a coalition $G$ which changes the output of mechanism by misreporting their locations, namely  $f(\textbf{x}'_G, \textbf{x}_{-G})=1$, there must exists an agent in $N_1(\textbf{x})$ but she reports falsely in $N_2(\textbf{x})$ and then  her satisfaction strictly decreases.
    
    The approximation ratio of mechanism \ref{th3} is at most 2 since at least half of  agents will achieve their best possible satisfaction, that is, $SS(f(\textbf{x}),\textbf{x})\geq\frac{n}{2}$. With $SS(opt(\textbf{x}),\textbf{x})\leq n$, we get that $\alpha\geq2$.

    Moreover, the approximation ratio of Mechanism \ref{th3} is tight. Consider a location profile $\textbf{x}$ with two agents, where $\textbf{x}_1=(0,1)$ and $\textbf{x}_2=(0,\frac{1}{2})$. For this profile, we can show that $opt(\textbf{x})=1$ and  $SS(opt(\textbf{x}),\textbf{x})=2$. However, Mechanism \ref{th3} locates the facility at $0$ and achieves  $SS(f(\textbf{x}),\textbf{x})=1$. Thus, the approximation ratio of Mechanism 4 is $2$.  \qed
\end{proof}

\noindent
\textbf{Proof of Theorem 7.}
    \begin{proof}
    Let $f$ be a randomized strategy-proof mechanism which has approximation ratio of $\alpha$. Consider a location profile $\textbf{x}$ with two agents, where $\textbf{x}_1=(\frac{1}{6},\frac{1}{6},\frac{5}{6})$, $\textbf{x}_2=(\frac{5}{6},\frac{5}{6},\frac{5}{6})$. The optimal facility location is at 0 and the social satisfaction is $SS(opt(\textbf{x}),\textbf{x})=\frac{10}{7}.$

The social satisfaction  obtained from $f$ is
\begin{equation}\label{eq1}
    \begin{aligned}
        SS(f(\textbf{x}),\textbf{x})&=\mathbb{E}_{y\sim f(\textbf{x})}[s(y,\textbf{x}_1)+s(y,\textbf{x}_2)]\\
        &=\mathbb{E}_{y\sim f(\textbf{x})}\left[\frac{2|y-\frac{1}{6}|+|y-\frac{5}{6}|-\frac{2}{3}}{\frac{11}{6}-\frac{2}{3}}+\frac{3|y-\frac{5}{6}|}{\frac{5}{2}}\right]\\
        &=\mathbb{E}_{y\sim f(\textbf{x})}\left[\frac{12}{7}\left|y-\frac{1}{6}
        \right|+\frac{72}{35}\left|y-\frac{5}{6}\right|\right]-\frac{4}{7}\geq\frac{1}{\alpha}\cdot\frac{10}{7}.
    \end{aligned}
\end{equation}
In Eq. (\ref{eq1}), 
	\begin{equation} \label{eq3}
    \begin{aligned}
        \mathbb{E}_{y\sim f(\textbf{x})}\left[\left|y-\frac{1}{6}\right|\right]&=\mathbb{E}_{y\sim f(\textbf{x})}\left[\frac{1}{6}-y\bigg| y\leq\frac{1}{6}\right]\\
        &+\mathbb{E}_{y\sim f(\textbf{x})}\left[y-\frac{1}{6}\bigg|\frac{1}{6}<y<\frac{5}{6}\right]+\mathbb{E}_{y\sim f(\textbf{x})}\left[y-\frac{1}{6}\bigg|y\geq\frac{5}{6}\right],
    \end{aligned}
		\end{equation}

\begin{equation}\label{eq4}
    \begin{aligned}
       \mathbb{E}_{y\sim f(\textbf{x})}\left[\left|y-\frac{5}{6}\right|\right]
        &=\mathbb{E}_{y\sim f(\textbf{x})}\left[\frac{5}{6}-y\bigg|y\leq\frac{1}{6}\right]\\
        &+\mathbb{E}_{y\sim f(\textbf{x})}\left[\frac{5}{6}-y\bigg|\frac{1}{6}<y<\frac{5}{6}\right]+\mathbb{E}_{y\sim f(\textbf{x})}\left[y-\frac{5}{6}\bigg|y\geq\frac{5}{6}\right].
    \end{aligned}
\end{equation}

Let $L_1=\mathbb{E}_{y\sim f(\textbf{x})}[\frac{1}{6}-y|y\leq \frac{1}{6}]$, $R_1=\mathbb{E}_{y\sim f(\textbf{x})}[y-\frac{5}{6}|y\geq\frac{5}{6}]$. From Eqs. (\ref{eq3}) and (\ref{eq4}), we can get 
\begin{equation}\label{eq5}
    \begin{aligned}
        \mathbb{E}_{y\sim f(\textbf{x})}\left[\left|y-\frac{1}{6}\right|+\left|y-\frac{5}{6}\right|\right]=2L_1+2R_1+\frac{2}{3}.
    \end{aligned}
\end{equation}
Since for any $y\in I$, $\frac{2}{3}\leq|y-\frac{1}{6}|+|y-\frac{5}{6}|\leq1$, we have  $L_1,R_1\geq0, L_1+R_1\leq \frac{1}{6}$.
Then, 
\begin{equation}\label{eq22}
    SS(f(\textbf{x}),\textbf{x})=\frac{144}{35}(L_1+R_1)+\frac{4}{5}-\frac{12}{35}\mathbb{E}_{y\sim f(\textbf{x})}\left[y-\frac{1}{6}\right]\geq \frac{1}{\alpha}\cdot\frac{10}{7}.
\end{equation}

From Eq. (\ref{eq22}) and $L_1+R_1\leq \frac{1}{6}$, we can get \begin{equation}\label{eq6}
    \mathbb{E}_{y\sim f(\textbf{x})}\left[y-\frac{1}{6}\right]\leq\frac{13}{3}-\frac{25}{6}\cdot\frac{1}{\alpha}.
\end{equation}

Consider a new profile $\textbf{x}'$ with two agents where $\textbf{x}'_1=(\frac{1}{6},\frac{1}{6},\frac{1}{6})$ and $\textbf{x}'_2=(\frac{5}{6},\frac{5}{6},\frac{5}{6})$, which implies agent 1 in $\textbf{x}$ misreport her locations to $(\frac{1}{6},\frac{1}{6},\frac{1}{6})$. The optimal facility location of $\textbf{x}'$ is at 0 and its social satisfaction is $\frac{6}{5}$. The social satisfaction of $\textbf{x}'$ followed by $f$ is 
\begin{equation}\label{eq2}
    \begin{aligned}
        SS(f(\textbf{x}'),\textbf{x}')&=\mathbb{E}_{y\sim f(\textbf{x}')}[s(y,\textbf{x}_1')+s(y,\textbf{x}_2')]\\
        &=\mathbb{E}_{y\sim f(\textbf{x}')}\left[\frac{3|y-\frac{1}{6}|}{\frac{5}{2}}+\frac{3|y-\frac{5}{6}|}{\frac{5}{2}}\right]\\
        &=\frac{6}{5}\mathbb{E}_{y\sim f(\textbf{x}')}\left[\left|y-\frac{1}{6}\right|+\left|y-\frac{5}{6}\right|\right]\geq\frac{1}{\alpha}\cdot\frac{6}{5}.
    \end{aligned}
\end{equation}

Let $L_2=\mathbb{E}_{y\sim f(\textbf{x}')}[\frac{1}{6}-y|y\leq \frac{1}{6}]$, $R_2=\mathbb{E}_{y\sim f(\textbf{x}')}[y-\frac{5}{6}|y\geq\frac{5}{6}]$. Similar to Eq. (\ref{eq5}), we can get 
\begin{equation}\label{eq7}
     \mathbb{E}_{y\sim f(\textbf{x}')}\left[\left|y-\frac{1}{6}\right|+\left|y-\frac{5}{6}\right|\right]=2L_2+2R_2+\frac{2}{3}.
\end{equation}

Then, $$SS(f(\textbf{x}'),\textbf{x}')=\frac{12}{5}(L_2+R_2)+\frac{2}{5}\geq\frac{6}{5}\cdot\frac{1}{\alpha}$$ which implies  $L_2+R_2\geq\frac{1}{2\alpha}-\frac{1}{3}$. 

Since $f$ is a randomized strategy-proof mechanism, we have that  $s(f(\textbf{x}),\textbf{x}_1)\geq s(f(\textbf{x}'),\textbf{x}_1)$, which implies $\mathbb{E}_{y\sim f(\textbf{x})}[2|y-\frac{1}{6}|+|y-\frac{5}{6}|]\geq\mathbb{E}_{y\sim f(\textbf{x}')}[2|y-\frac{1}{6}|+|y-\frac{5}{6}|]$.
	Due to Eqs. (\ref{eq5}) and (\ref{eq7}), we have 
	\begin{equation}\label{eq8}
		\mathbb{E}_{y\sim f(\textbf{x})}\left[\left|y-\frac{1}{6}\right|\right]+2L_1+2R_1\geq \mathbb{E}_{y\sim f(\textbf{x}')}\left[\left|y-\frac{1}{6}\right|\right]+2L_2+2R_2.
	\end{equation}
From Eqs. (\ref{eq6}) and (\ref{eq8}) and $L_1+R_1\leq \frac{1}{6}$, we have 
$\mathbb{E}_{y\sim f(\textbf{x}')}\left[|y-\frac{1}{6}|\right]+2L_2+2R_2\leq \frac{14}{3}-\frac{25}{6}\cdot\frac{1}{\alpha}$.  By symmetric, we have $\mathbb{E}_{y\sim f(\textbf{x}')}\left[|y-\frac{5}{6}|\right]+2L_2+2R_2\leq \frac{14}{3}-\frac{25}{6}\cdot\frac{1}{\alpha}$.

	With the last two inequalities and  $L_2+R_2\geq\frac{1}{2\alpha}-\frac{1}{3}$,  we can get the following, $$\mathbb{E}_{y\sim f(\textbf{x}')}\left[\left|y-\frac{1}{6}\right|+\left|y-\frac{5}{6}\right|\right]\leq\frac{28}{3}-\frac{25}{3}\cdot\frac{1}{\alpha}-4\left(\frac{1}{2\alpha}-\frac{1}{3}\right)\leq \frac{32}{3}-\frac{31}{3}\cdot\frac{1}{\alpha}.$$
	The approximation ratio of $f$ is 
	\begin{equation} \label{eq9}
		\alpha\geq\frac{SS(opt(\textbf{x}'),\textbf{x}')}{SS(f(\textbf{x}'),\textbf{x}')}\geq \frac{\frac{6}{5}}{\frac{6}{5}\left[d(p_2,\frac{1}{6})+d(p_2,\frac{5}{6})\right]}\geq\frac{1}{\frac{32}{3}-\frac{31}{3}\cdot\frac{1}{\alpha}}.   
	\end{equation}
	
	Eq. (\ref{eq9}) shows that $\alpha\geq\frac{34}{32}\approx1.0625$. \qed

\end{proof}

\end{document}